\newcommand{\be}{\begin{equation}}
\newcommand{\ee}{\end{equation}}
\newcommand{\ba}{\begin{array}}
\newcommand{\ea}{\end{array}}
\newcommand{\bea}{\begin{eqnarray}}
\newcommand{\eea}{\end{eqnarray}}
\newcommand{\calH}{{\cal H }}
\newcommand{\calM}{{\cal M }}
\newcommand{\calT}{{\cal T }}
\newcommand{\calP}{{\cal P }}
\newcommand{\calD}{{\cal D }}
\newcommand{\CC}{\mathbb{C}}
\newcommand{\la}{\langle}
\newcommand{\ra}{\rangle}
\newcommand{\eff}{\mathrm{eff}}
\newcommand{\nn}{\nonumber}
\newcommand{\trace}{\mathop{\mathrm{Tr}}\nolimits}
\newtheorem{lemma}{Lemma}
\newtheorem{theorem}{Theorem}
\begin{document}

\title{Criticality without frustration for quantum spin-$1$ chains }

\author{Sergey \surname{Bravyi}}
\affiliation{IBM Watson Research
Center,  Yorktown Heights  NY 10598}
\author{Libor \surname{Caha}}
\affiliation{Faculty of Informatics, Masaryk University, Botanická 68a, Brno, Czech Republic}
\author{Ramis \surname{Movassagh}}
\affiliation{Department of Mathematics, Massachusetts Institute of Technology, Cambridge MA 02139}
\author{Daniel \surname{Nagaj}}
\affiliation{Research Center for Quantum Information, Slovak
Academy of Sciences, Bratislava, Slovakia}
\author{Peter W. \surname{Shor}}
\affiliation{Department of Mathematics, Massachusetts Institute of Technology, Cambridge MA 02139}

\date{\today}

\begin{abstract}
Frustration-free (FF) spin chains have a property that their ground state
minimizes all individual terms in the chain Hamiltonian.
   We ask how entangled the ground state of a FF quantum spin-$s$ chain
with nearest-neighbor interactions can be for small values of $s$.
While FF spin-$1/2$ chains are known to have unentangled ground states, the
case $s=1$ remains less explored.
We propose the first example of a FF translation-invariant spin-$1$ chain
that has a unique highly entangled ground state and exhibits some signatures of
a critical behavior. The ground state can be viewed as the uniform superposition of
balanced strings of left and right parentheses
separated by empty spaces.
Entanglement entropy of one half of the chain
scales as  $\frac12 \log{n}+O(1)$, where $n$ is the number of spins. We prove that the energy gap above the ground
state is polynomial in $1/n$. The proof
relies on a new result concerning statistics of
Dyck paths  which might be of independent interest.
\end{abstract}

\pacs{}

\maketitle

The presence of long-range entanglement in the ground states of critical spin chains
with only short-range interactions
is one of the most fascinating discoveries in the theory of quantum phase transitions~\cite{Sachdev11,Vidal03,Korepin04}.
It can be quantified by the scaling
law $S(L)\sim \log{L}$, where
$S(L)$ is the entanglement entropy of
a block of $L$ spins. In contrast, non-critical
spin chains characterized by a non-vanishing energy gap
obey an area law~\cite{Hastings07,Eisert08,Arad11}
asserting that $S(L)$ has a constant upper bound independent of $L$.

One can ask how stable is the long-range ground state entanglement
against small variations of Hamiltonian parameters?  The scaling theory predicts~\cite{Vidal03,Latorre03} that
a chain whose Hamiltonian is controlled by some parameter $g$
follows the  law $S(L)\sim \log{L}$
only if $L$ does not exceed the correlation length
$\xi\sim |g-g_c|^{-\nu}$, where $\nu>0$ is the critical exponent
and $g_c$ is the critical point. For larger $L$ the entropy $S(L)$
saturates at a constant value. Hence achieving the scaling $S(L)\sim \log{L}$
requires fine-tuning of the parameter $g$ with
precision scaling polynomially with $1/L$ posing a serious experimental challenge.

The stringent precision requirement described above can be partially avoided
for spin chains described by {\em frustration-free} Hamiltonians.
Well-known (non-critical) examples of such Hamiltonians are the
Heisenberg ferromagnetic chain~\cite{Koma95}, the AKLT model~\cite{AKLT87},
and parent Hamiltonians of matrix product states~\cite{Fannes92,PEPS07}.
More generally, we consider Hamiltonians of a form  $H=\sum_j g_j \Pi_{j,j+1}$, where
$\Pi_{j,j+1}$ is a projector acting on spins $j,j+1$ and
$g_j>0$ are some coefficients. The Hamiltonian is called frustration-free (FF)
if the projectors $\Pi_{j,j+1}$ have a common zero eigenvector $\psi$.
Such zero eigenvectors $\psi$ span the ground subspace of $H$.
Clearly, the ground subspace  does not depend on the
coefficients $g_j$ as long as they remain positive.
This inherent stability against variations of the Hamiltonian parameters
motivates a question  of whether FF Hamiltonians can describe critical spin chains.

In this Letter  we propose a  toy model describing a FF translation-invariant spin-$1$ chain
with open boundary conditions
 that has a unique ground state with a logarithmic
scaling of entanglement entropy and a polynomial energy gap.
Thus our FF model reproduces some of the main signatures of critical spin chains.
In contrast, it was recently shown by Chen et al~\cite{Chen04} that any FF spin-$1/2$ chain has an unentangled ground state.
Our work may also offer valuable insights for the problem of realizing
long-range entanglement in open quantum systems with an engineered dissipation.
Indeed, it was shown by Kraus et al~\cite{Kraus08} and Verstraete et al~\cite{Verstraete08}
that the ground state of a FF Hamiltonian can be represented as a unique steady state of a
dissipative process described by the Lindblad equation with local quantum jump operators. A proposal for realizing
such dissipative processes in cold atom systems has been made by Diehl et al~\cite{Diehl08}.

{\em Main results. \hspace{2mm}}
We begin by describing the ground state of our model.
The three basis states of a single spin will be identified
with a left bracket $l\equiv [$, right bracket $r\equiv \, ]$, and
an empty space represented by $0$. Hence a state of a single
spin can be written as $\alpha |0\ra + \beta |l\ra + \gamma |r\ra$
for some complex coefficients $\alpha,\beta,\gamma$.
For a chain of $n$ spins, basis states $|s\ra$ correspond to strings
$s\in \{0,l,r\}^n$.
A string $s$ is called a {\em Motzkin path}~\cite{Motzkin2,*Motzkin1}
iff (i) any initial  segment of $s$ contains at least as many $l$'s as $r$'s,
and (ii) the total number of $l$'s is equal to the total number of $r$'s.
For example, a string $lllr0rl0rr$ is a Motzkin path while $l0lrrrllr$ is not
since its initial segment $l0lrrr$ has more $r$'s than $l$'s.
By ignoring all $0$'s one can view Motzkin paths as  balanced strings of
left and right brackets.  We shall be interested in the {\em Motzkin state}
$|\calM_n\ra$ which is the uniform superposition of all
Motzkin paths of length $n$.
For example, $|\calM_2\ra \sim |00\ra+|lr\ra$,
$|\calM_3\ra \sim |000\ra+|lr0\ra + |l0r\ra + |0lr\ra$, and
\bea
|\calM_4\ra &\sim & |0000\ra + |00lr\ra + |0l0r\ra + |l00r\ra \nn \\
&& + |0lr0\ra + |l0r0\ra + |lr00\ra + |llrr\ra + |lrlr\ra. \nn
\eea
Let us first ask how entangled is the Motzkin state.
For a contiguous block of spins $A$,
let $\rho_A =\trace_{j\notin A} |\calM_n\ra\la \calM_n|$
be the reduced density matrix of $A$.
Two important measures of entanglement are
the Schmidt rank $\chi(A)$ equal to the number of non-zero
eigenvalues of $\rho_A$, and the
 entanglement entropy
$S(A)=-\trace \rho_A \log_2{\rho_A}$.
We will choose $A$ as the left half of the chain,
$A=\{1,\ldots,n/2\}$. We show that
\be
\label{ent}
\chi(A)=1+n/2 \quad \mbox{and} \quad S(A)= \frac12 \log_2{n} + c_n
\ee
where $\lim_{n\to \infty} c_n= 0.14(5)$.
The linear scaling of the Schmidt rank stems from the presence of locally unmatched
left brackets in $A$ whose matching right brackets belong to the complementary region $B=[1,n]\backslash A$.
The number of the locally unmatched brackets $m$ can vary from $0$ to $n/2$ and must
be the same in $A$ and $B$ leading to long-range
entanglement between the two halves of the chain.

Although the definition of Motzkin paths may seem very non-local,
we will show that the state $|\calM_n\ra$ can be specified by
imposing  local constraints on nearest-neighbor spins.
Let $\Pi$ be a projector onto the three-dimensional
subspace of $\CC^3\otimes \CC^3$ spanned by
states $|0l\ra-|l0\ra$, $|0r\ra-|r0\ra$, and $|00\ra-|lr\ra$.
Our main result is the following.
\begin{theorem}
\label{thm:main}
The Motzkin state $|\calM_n\ra$ is a unique
ground state with zero energy of a  frustration-free Hamiltonian
\be
\label{H}
H=|r\ra\la r|_1 + |l\ra\la l|_n+\sum_{j=1}^{n-1} \Pi_{j,j+1},
\ee
where  subscripts indicate spins acted upon by a projector.
The spectral gap\footnote{Here and below the spectral gap of a Hamiltonian
means the difference between the smallest and the second smallest eigenvalue.}
 of $H$ scales  polynomially with $1/n$.
 \end{theorem}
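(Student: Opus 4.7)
I would verify $H|\calM_n\ra=0$ termwise. The boundary projector $|r\ra\la r|_1$ annihilates $|\calM_n\ra$ because every Motzkin path begins with $l$ or $0$, and symmetrically for $|l\ra\la l|_n$. The range of $\Pi_{j,j+1}$ is spanned by the three antisymmetric vectors $|0l\ra-|l0\ra$, $|0r\ra-|r0\ra$, $|00\ra-|lr\ra$, each corresponding to a local move on strings. A short case check shows that each move is an involution on the set of length-$n$ Motzkin paths; since $|\calM_n\ra$ is uniform on these paths, it is symmetric on every orbit of the involution and hence orthogonal to every generator of $\Pi_{j,j+1}$.

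\textbf{Uniqueness.} For any zero-energy vector $|\psi\ra=\sum_s c_s |s\ra$, the bulk projectors force $c_s=c_{s'}$ whenever $s,s'$ differ by one of the three moves. Let $a(s)$ and $b(s)$ count the unmatched right and left brackets of $s$ under the usual parenthesis matching; a direct inspection shows that the three moves preserve $(a,b)$. I would then show that every string with given $(a,b)$ can be reduced, via iterated $lr\to 00$ cancellations followed by sliding the remaining $0$'s past the unmatched brackets, to the canonical form $r^{a}0^{n-a-b}l^{b}$, so each sector is a single equivalence class. The boundary terms annihilate the canonical representative only when $(a,b)=(0,0)$, so $c_s=0$ outside the Motzkin sector and is constant inside it, giving $|\psi\ra\propto|\calM_n\ra$.

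\textbf{Spectral gap.} This is the main challenge. I would decompose $(\CC^3)^{\otimes n}$ along the $(a,b)$-sectors and apply a projection-lemma argument to reduce the global gap to (i) the gap of $\sum_j \Pi_{j,j+1}$ on the Motzkin sector $(a,b)=(0,0)$, and (ii) the smallest eigenvalue of $H$ on each non-Motzkin sector, where the bulk and boundary parts share no common zero vector, so the geometric lemma of Kitaev yields a constant lower bound. The substantive task is (i): identify $I-c\sum_j \Pi_{j,j+1}$, suitably normalized and restricted to the Motzkin sector, as the transition operator of a reversible Markov chain on Motzkin paths with uniform stationary distribution, so that its spectral gap coincides with the gap of the bulk Hamiltonian up to a constant factor. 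To lower-bound this Markov-chain gap by $1/\mathrm{poly}(n)$ I would apply the canonical-paths method, routing between any two Motzkin paths through a prescribed sequence of elementary moves.

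The main obstacle — and the place where the announced new Dyck-path statistic enters — is the congestion estimate: showing that no single elementary transition lies on too many canonical routes. This reduces to tail bounds on the extremal heights and on the positions of matched-bracket pairs in a uniformly random Motzkin (equivalently Dyck) path. Translating intuition that the walk is diffusive into a quantitative polynomial bound requires exactly such a combinatorial input, and I expect the bulk of the technical work to be concentrated there, with the surrounding sector decomposition and projection-lemma steps being comparatively routine.
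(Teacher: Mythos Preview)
Your ground-state and uniqueness arguments match the paper's. The gaps are in the spectral-gap part.

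\textbf{Non-Motzkin sectors.} The claim that Kitaev's geometric lemma yields a \emph{constant} lower bound is wrong. In a sector $(a,b)$ with $a>0$, the bulk Hamiltonian $H_\sim=\sum_j\Pi_{j,j+1}$ has the unique ground state $|C_{a,b}\ra$ (the uniform superposition over that class), but its gap inside the sector is not known to be constant, and the overlap $\la C_{a,b}|r\ra\la r|_1|C_{a,b}\ra$ --- the fraction of strings in $C_{a,b}$ that begin with $r$ --- is only $n^{-O(1)}$, so the angle between the two kernels is polynomially small. The paper's actual argument for these sectors is as involved as the Motzkin-sector argument: it relabels the first unmatched $r$ as a distinguished particle $x$, applies the Projection Lemma to reduce to an effective single-particle hopping Hamiltonian on $[1,n]$ with a boundary potential at site~$1$, and then bounds that hopping gap via another Markov-chain argument. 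You cannot shortcut this with the geometric lemma.

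\textbf{Motzkin sector and the combinatorial input.} Your plan to analyze the Markov chain directly on Motzkin paths is not what the paper does, and your guess about the needed combinatorics is off. The paper first splits $H=H_{\mathrm{move}}+H_{\mathrm{int}}$ (the $0$-sliding moves versus the $00\leftrightarrow lr$ move), observes that $H_{\mathrm{move}}$ restricted to $\calH_M$ is isospectral to the ferromagnetic Heisenberg chain and hence has gap $\Omega(n^{-2})$, and applies the Projection Lemma to pass to an effective Hamiltonian $H_{\eff}$ on \emph{Dyck} paths whose transitions are insertions/removals of a single $lr$ pair. The canonical-paths bound then rests on a purely structural lemma: for every $k$ there is a map $f:\calD_k\to\calD_{k-1}$ that deletes one $lr$ pair, is surjective, and has all fibers of size at most four. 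This organizes all Dyck paths into a rooted ``supertree'' of branching $\le 4$, and canonical paths are routed along it; the congestion bound follows from the branching bound together with the unimodality of $l\mapsto C_l\binom{n}{2l}$. No tail bounds on heights or bracket positions enter. If you try to run canonical paths directly on the Motzkin chain you will need some substitute for this supertree, and ``diffusive intuition'' alone will not produce the required congestion estimate.
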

The theorem remains true if $H$
is modified by introducing arbitrary weights $g_j\ge 1$
for every projector in Eq.~(\ref{H}).
A polynomial lower bound on the spectral  gap of $H$
is, by far, the most difficult  part of Theorem~\ref{thm:main}.
Our proof consists of several steps.
First, we use a perturbation theory to relate the
spectrum of $H$ to the one of
an effective Hamiltonian $H_{\eff}$ acting on
Dyck paths --- balanced strings of left and right brackets~\footnote{One can regard Dyck paths
as a special case of Motzkin paths in which no `$0$' symbols are allowed.}.
This step involves successive applications of the Projection Lemma due to Kempe et al~\cite{KKR04}.
Secondly, we map $H_{\eff}$ to a stochastic matrix $P$ describing a random
walk on Dyck paths in which transitions correspond to insertions/removals of
consecutive $lr$ pairs.
The key step of the proof is to show that the random walk on Dyck paths is rapidly mixing.
Our method of proving the desired rapid mixing property employs the polyhedral description of matchings in bipartite
graphs~\cite{Schrijver}. This method appears to be new and might be interesting on its own right.
Exact diagonalization performed for short chains suggests that
the spectral gap of $H$ scales as $\Delta\sim 1/n^3$, see Fig.~\ref{fig:gap}.
Our proof gives an upper bound $\Delta=O(n^{-1/2})$ and a lower bound $\Delta=\Omega(n^{-c})$
for some $c\gg 1$.

\begin{center}
\begin{figure}[t]
\centerline{
\mbox{
 \includegraphics[height=5cm]{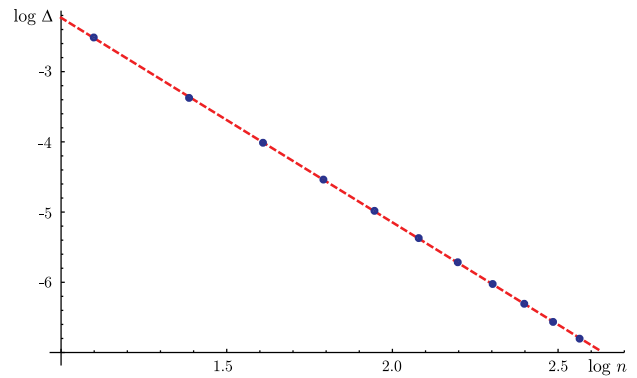}
 }}
\caption{The spectral gap $\Delta$ of the Hamiltonian $H$ defined in Eq.~(\ref{H})
for $3\le n\le 13$ obtained by the exact diagonalization. The dashed line shows a linear fit $\log{\Delta} = 0.68-2.91 \log{n}$.
Our numerics suggests that the first excited state of $H$ belongs to the subspace spanned by
strings with exactly one unmatched bracket.
}
 \label{fig:gap}
\end{figure}
\end{center}

{\em Previous work.  \hspace{2mm}}
Examples of spin chain Hamiltonians with highly entangled ground states
have been constructed by Gottesman and Hastings~\cite{Gottesman09},
and Irani~\cite{Irani09} for local dimension $d=9$ and $d=21$ respectively
(here and below $d\equiv 2s+1$).
 These models exhibit a linear scaling of the entropy $S(L)$ for some blocks of spins
while the spectral gap is polynomial in $1/n$. The model found in~\cite{Irani09}
is FF and translation-invariant.
Ref.~\cite{MFGNOS} focused on `generic' spin chains with a Hamiltonian $H=\sum_j \Pi_{j,j+1}$
where the projectors  $\Pi_{j,j+1}$ are chosen randomly with a fixed rank $r$~\footnote{Though the results
of Ref.~\cite{MFGNOS} are applicable to more general Hamiltonians, the
convenient restriction to random projectors is sufficient for
addressing the degeneracy and frustration condition.}.
The authors of~\cite{MFGNOS} identified three important regimes: (i) frustrated chains, $r >d^2/4$,
(ii) FF chains, $d\le r\le d^2/4$, and (iii) FF chains with product ground states, $r<d$.
It was conjectured in~\cite{MFGNOS} that generic FF chains in the regime $d\le r\le d^2/4$
have only highly entangled ground states with probability one.
This regime however requires local dimension $d\ge 4$.
The new model based on Motzkin paths corresponds to the case $d=r=3$ (ignoring the boundary terms)
and thus it can be frustrated by arbitrarily small deformations of the projectors making them generic.
In addition, results of~\cite{MFGNOS} imply that
examples of FF spin-$1$ chains with highly entangled ground states have measure zero
in the parameter space.
The question of whether matrix product states specified by FF parent Hamiltonians can exhibit
quantum phase transitions has been studied by Wolf et al~\cite{Wolf06}.
However, the models studied
in~\cite{Wolf06} have bounded entanglement entropy, $S(L)=O(1)$.

{\em Hamiltonian.  \hspace{2mm}}
Let us now construct a FF Hamiltonian $H$ whose unique ground state is $|\calM_n\ra$.
First we need to find a more local description of Motzkin paths.
Let $\Sigma=\{0,l,r\}$. We will say that a pair of strings $s,t\in \Sigma^n$
is equivalent, $s\sim t$, if $s$ can be obtained from $t$ by a sequence of local
moves
\be
\label{moves}
00\leftrightarrow lr, \quad 0l\leftrightarrow l0, \quad 0r\leftrightarrow r0.
\ee
These moves can be applied to any consecutive pair of letters.
For any integers $p,q\ge 0$ such that $p+q\le n$ define a string
\[
c_{p,q}\equiv \underbrace{r\ldots r}_p \underbrace{0\ldots 0}_{n-p-q} \underbrace{l\ldots l}_q.
\]
\begin{lemma}
Any string $s\in \Sigma^n$ is equivalent to one and only one string $c_{p,q}$.
A string $s\in \Sigma^n$ is a Motzkin path iff it is equivalent to the all-zeros string,
$s\sim c_{0,0}$.
\end{lemma}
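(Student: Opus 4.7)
The plan is to exhibit an invariant of the equivalence relation that distinguishes the canonical strings $c_{p,q}$, and then to prove that every string can be reduced to one of these canonical forms. For $s\in\Sigma^n$, let $p(s)$ denote the number of right brackets that remain unmatched, and $q(s)$ the number of unmatched left brackets, after applying the standard left-to-right bracket matching to the subsequence of $l$'s and $r$'s in $s$ (the $0$'s being ignored). Since unmatched $r$'s always lie to the left of unmatched $l$'s in this matching, the canonical string $c_{p,q}$ realizes the invariants $(p,q)$. Invariance under the three moves is straightforward: the swaps $0l\leftrightarrow l0$ and $0r\leftrightarrow r0$ leave the bracket subsequence literally unchanged, and the move $00\leftrightarrow lr$ inserts (or removes) an adjacent $lr$ pair that must match itself in the bracket matching and therefore does not alter the status of any other bracket. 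Because the $c_{p,q}$ have pairwise distinct invariants, no two of them are equivalent, which already gives the uniqueness half of the lemma.

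For existence, I would reduce an arbitrary $s$ to $c_{p(s),q(s)}$ by an explicit algorithm. First, use moves 2 and 3 to commute every $0$ to the right past the brackets, so that $s\sim w\cdot 0^k$ where $w\in\{l,r\}^{n-k}$ is the bracket subsequence of $s$. If $w$ contains any adjacent substring $lr$, apply move 1 to turn it into $00$ and then use moves 2 and 3 to push those two new zeros to the right end, producing $w'\cdot 0^{k+2}$ with $|w'|=|w|-2$. This strictly shrinks the bracket length, so after finitely many iterations one reaches a bracket word $w^*$ with no adjacent $lr$. A short combinatorial observation---if an $l$ preceded an $r$ anywhere in $w^*$, the closest such pair would necessarily be adjacent---forces $w^*=r^{p^*}l^{q^*}$ for some $p^*,q^*$. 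A final pass of moves 2 and 3 converts the result to $c_{p^*,q^*}$, and the already-established invariance of $(p,q)$ identifies $(p^*,q^*)$ with $(p(s),q(s))$.

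The Motzkin characterization then drops out immediately: $s$ is a Motzkin path precisely when its bracket subsequence is a balanced Dyck word, which is precisely the condition $p(s)=q(s)=0$, which by the first part is equivalent to $s\sim c_{0,0}=0^n$. The only delicate ingredient in the plan is the termination observation about pure bracket strings with no adjacent $lr$, but it is a one-line argument, so I do not anticipate any real obstacle; the remainder of the proof is bookkeeping with the local moves, and everything is clean once the invariant $(p,q)$ is in hand.
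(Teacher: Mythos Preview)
Your proof is correct and follows essentially the same approach as the paper: both establish uniqueness via the invariant ``number of unmatched right/left brackets'' (the paper phrases it through the prefix quantities $R_j(s)-L_j(s)$, which encode the same information), and both establish existence by repeatedly eliminating matched $lr$ pairs until only a string of the form $r^p 0^{\,n-p-q} l^q$ remains. The only cosmetic difference is the order of operations in the reduction---you first sweep all $0$'s to one end and then cancel adjacent $lr$'s, whereas the paper cancels $lr$ and $l0\cdots 0r$ blocks in place---but this does not constitute a genuinely different argument.
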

\begin{proof}
Indeed, applying the local moves Eq.~(\ref{moves}) one can make sure that $s$
does not contain substrings $lr$ or $l0\ldots 0r$. If this is the case and $s$ contains at least one $l$,
then all letters to the right of $l$ are $l$ or $0$. Similarly, if $s$ contains at least one $r$, then all
letters to the left of $r$ are $r$ or $0$. Since we can swap $0$ with any other letter by the local moves,
$s$ is equivalent to $c_{p,q}$ for some $p,q$.
It remains to show that different strings $c_{p,q}$ are not equivalent to each other.
Let $L_j(s)$ and $R_j(s)$ be the number of $l$'s and $r$'s among the first $j$ letters of $s$.
Suppose $c_{p,q}\sim c_{p',q'}$ and $p\ge p'$.
Then $R_p(s)-L_p(s)\le p'$ for any string $s$ equivalent to $c_{p',q'}$.
This is a contradiction unless $p=p'$.  Similarly one shows that $q=q'$.
\end{proof}
The lemma shows that the set of all strings $\Sigma^n$ can be partitioned
into equivalence classes $C_{p,q}$, such that $C_{p,q}$ includes
all strings equivalent to $c_{p,q}$.
In other words, $s\in C_{p,q}$ iff $s$ has $p$ unmacthed right brackets
and $q$ unmatched left brackets. Accordingly, the set of Motzkin paths
$\calM_n$ coincides with the equivalence class $C_{0,0}$.

Let us now define projectors `implementing' the local moves in Eq.~(\ref{moves}).
Define normalized states
\[
|\phi\ra\sim |00\ra-|lr\ra, \quad |\psi^l\ra\sim |0l\ra -|l0\ra, \quad |\psi^r\ra\sim |0r\ra-|r0\ra
\]
and a projector $\Pi=|\phi\ra\la \phi|+|\psi^l\ra \la \psi^l| + |\psi^r\ra \la \psi^r|$.
Application of $\Pi$ to  a pair of spins $j,j+1$ will be denoted $\Pi_{j,j+1}$.
If some state $\psi$ is annihilated by every projector $\Pi_{j,j+1}$,
it must have the same amplitude on any pair of equivalent strings, that is,
$\la s|\psi\ra=\la t|\psi\ra$ whenever $s\sim t$. It follows that a Hamiltonian
$H_\sim=\sum_{j=1}^{n-1} \Pi_{j,j+1}$ is FF and the ground subspace of $H_\sim$
is spanned by pairwise orthogonal states
$|C_{p,q}\ra$, where $|C_{p,q}\ra$ is the uniform superposition of all
strings in $C_{p,q}$.
The desired Motzkin state $|\calM_n\ra=|C_{0,0}\ra$ is thus a ground state of $H_\sim$.
(It is worth mentioning that not all states $|C_{p,q}\ra$ are highly entangled.
For example, $|C_{n,0}\ra=|r\ra^{\otimes n}$ is a product state.)
How can we exclude the unwanted ground states $|C_{p,q}\ra$ with $p\ne 0$ or $q\ne 0$?
We note that $C_{0,0}$ is the only
class in which strings never start from $r$ and never end with $l$.
 Hence a modified Hamiltonian
$H=|r\ra\la r|_1 + |l\ra\la l|_n + H_{\sim}$
that penalizes strings starting from $r$ or ending with $l$ has
a unique ground state $|C_{0,0}\ra$. This proves the first part of Theorem~\ref{thm:main}.
We can also consider weighted Hamiltonians $H_\sim(g) =\sum_{j=1}^{n-1} g_j \Pi_{j,j+1}$
and $H(g)=g_0 |r\ra\la r|_1 + g_n |l\ra\la l|_n + H_{\sim}(g)$, where $g_0,\ldots,g_n\ge 1$
are arbitrary coefficients. One can easily check that the ground state of $H(g)$ does not
depend on $g$ and $H(g)\ge H$. It implies that the spectral gap of $H(g)$ is lower bounded
by the one of $H$.

{\em Entanglement entropy.  \hspace{2mm}}
We can now construct the Schmidt decomposition of the Motzkin state.
Let $A=\{1,\ldots,n/2\}$ and $B=\{n/2+1,\ldots,n\}$ be the two halves of the chain
(we assume that $n$ is even). For any string $s\in \Sigma^n$ let $s_A$ and $s_B$
be the restrictions of $s$ onto $A$ and $B$. We claim that $s$ is a Motzkin path iff
$s_A\sim c_{0,m}$ and $s_B\sim c_{m,0}$ for some $0\le m \le n/2$.
Indeed, $s_A$ ($s_B$) cannot have
unmatched right (left) brackets, while each unmatched left bracket in $s_A$ must be matched
with some unmatched right bracket in $s_B$. It follows that the Schmidt decomposition
of $|\calM_n\ra$ can be written as
\be
\label{Schmidt}
|\calM_n\ra = \sum_{m=0}^{n/2} \sqrt{p_m} \, |\hat{C}_{0,m}\ra_A  \otimes |\hat{C}_{m,0}\ra_B,
\ee
where $|\hat{C}_{p,q}\ra$ is the normalized uniform superposition
of all strings in $C_{p,q}$ and $p_m$
are the Schmidt coefficients  defined by
\be
\label{Schmidt1}
p_m=\frac{|C_{0,m}(n/2)|^2}{|C_{0,0}(n)|}.
\ee
Here we added an explicit dependence of the classes $C_{p,q}$ on $n$.
For large $n$ and $m$ one can use an
approximation  $p_m\sim m^2 \exp{(-3m^2/n)}$, see the Supplementary Material
for the proof. Note that $p_m$ achieves its maximum at $m^*\approx \sqrt{n/3}$.
Approximating the sum $\sum_m p_m=1$ by an integral over $\alpha=m/\sqrt{n}$
one gets  $p_m\approx n^{-1/2} q_{\alpha(m)}$, where $q_\alpha$ is a normalized pdf defined as
\[
q_\alpha=Z^{-1} \alpha^2 e^{-3\alpha^2}, \quad
Z\equiv \int_0^\infty d\alpha \alpha^2 e^{-3\alpha^2}=\frac{\sqrt{\pi}}{4\cdot 3^{3/2}}.
\]
It gives
\[
S(A)=-\sum_m p_m \log_2{p_m} \approx \log{\sqrt{n}} -\int_0^\infty d\alpha \, q_\alpha \log_2{q_\alpha}.
\]
Evaluating the integral over $\alpha$ yields Eq.~(\ref{ent}).
The approximation $p_m\approx n^{-1/2} q_{\alpha(m)}$ also implies
that $\max_m p_m=O(n^{-1/2})$. This bound will be used below in our spectral gap analysis.
We conjecture  that one can achieve a power law
scaling of $S(A)$ in Eq.~(\ref{ent}) by introducing two types of brackets, say $l\equiv [$, $r\equiv ]$,
$l'\equiv \{$, and $r'\equiv \}$, such that bracket pairs $lr$ and $l'r'$ are created
from the `vacuum' $00$ in a maximally entangled state $(|lr\ra+|l'r'\ra)/\sqrt{2}$.
The local moves Eq.~(\ref{moves}) must be modified as
$0x \leftrightarrow x0$, where $x$ can be either of $l,r,l',r'$, and
$00\leftrightarrow (lr+l'r')/\sqrt{2}$.
We expect the modified model with two types of brackets to obey a scaling
$S(A)\sim \sqrt{n}$, while its gap will remain lower bounded by an inverse  polynomial.

{\em Spectral gap: upper bound. \hspace{2mm}} Let $\lambda_2>0$ be the smallest non-zero eigenvalue of the Hamiltonian defined in Eq.~(\ref{H}).
We shall use the fact that the ground state $|\calM_n\ra$ is highly entangled to
prove an upper bound  $\lambda_2\le O(n^{-1/2})$.
Fix any $k\in [0,n/2]$ and define a `twisted' version of the ground state:
\[
|\phi\ra =\sum_{m=0}^{n/2} e^{i\theta_m} \, \sqrt{p_m} \, |\hat{C}_{0,m}\ra_A  \otimes |\hat{C}_{m,0}\ra_B,
\]
where $\theta_m=0$ for $0\le m\le k$ and $\theta_m=\pi$ otherwise.
Note that $|\phi\ra$ and $|\calM_n\ra$ have the same reduced density matrices on $A$ and $B$.
Hence $\la \phi|H|\phi\ra =\la \phi|\Pi_{cut}|\phi\ra$, where $\Pi_{cut}\equiv \Pi_{n/2,n/2+1}$.
Since $\max_m p_m =O(n^{-1/2})$ and $\sum_m p_m=1$,  there must exist $k\in [0,n/2]$
such that $\sum_{0\le m\le k} p_m =1/2 \pm \epsilon$ for some $\epsilon=O(n^{-1/2})$.
This choice of $k$ ensures that
 $\la \phi|\calM_n\ra = \sum_m p_m e^{i\theta_m} \le 2\epsilon$,
 that is $\phi$ is almost orthogonal to the ground state.
Define a normalized state $|\tilde{\phi}\ra\sim |\phi\ra- \la \calM_n|\phi\ra \cdot |\calM_n\ra$.
Then $\la \tilde{\phi}|\calM_n\ra=0$ and
$\la \tilde{\phi}|H|\tilde{\phi}\ra =\la \tilde{\phi}|\Pi_{cut}|\tilde{\phi}\ra \le \la \phi|\Pi_{cut}|\phi\ra+O(\epsilon)$.
The difference $\la \phi|\Pi_{cut}|\phi\ra-\la \calM_n|\Pi_{cut}|\calM_n\ra$ gets contributions
only from the terms $m=k,k\pm 1$  in the Schmidt decomposition,  since $\Pi_{cut}$ can change the number of unmatched brackets in $A$
and $B$ at most by one. Since  $\la \calM_n|\Pi_{cut}|\calM_n\ra=0$, we get
\[
\la \phi|\Pi_{cut}|\phi\ra \le O(p_k+p_{k-1}+p_{k+1})=O(n^{-1/2}).
\]
We arrive at $\la \tilde{\phi}|H|\tilde{\phi}\ra=O(n^{-1/2})$.
Therefore $\lambda_2$ is at most $O(n^{-1/2})$.

{\em Spectral gap: lower bound.  \hspace{2mm}}
It remains to prove a lower bound $\lambda_2\ge n^{-O(1)}$.
Let $\calH_{p,q}$ be the subspace spanned by strings $s\in C_{p,q}$
and $\calH_M\equiv \calH_{0,0}$ be the Motzkin space spanned by Motzkin paths.
Note that $H$ preserves any subspace $\calH_{p,q}$ and the unique
ground state of $H$ belongs to $\calH_M$. Therefore it suffices
to derive a lower bound $n^{-O(1)}$ for two quantities: (i) the gap of $H$ inside the Motzkin space $\calH_M$,
and (ii) the ground state energy of $H$ inside any `unbalanced' subspace $\calH_{p,q}$
with $p\ne 0$ or $q\ne 0$. Below we shall focus on part~(i)
since it allows us to introduce all essential ideas. The proof of part~(ii) can be found in the Supplementary Material.

Recall that a string $s\in \{l,r\}^{2m}$
is called a {\em Dyck path} iff any initial  segment of $s$ contains at least as many $l$'s as $r$'s,
and the total number of $l$'s is equal to the total number of $r$'s.
For example, Dyck paths of length $6$ are $lllrrr$, $llrlrr$, $llrrlr$, $lrlrlr$, and $lrllrr$.
The proof of part~(i) consists of the following steps:\\
{\em Step~1.} Map the original Hamiltonian $H$ acting on Motzkin paths
to an effective Hamiltonian $H_{\eff}$ acting on Dyck paths using perturbation
theory. \\
{\em Step~2.} Map $H_{\eff}$ to a stochastic matrix $P$ describing a random walk on Dyck paths
in which transitions correspond to insertions or removals of consecutive $lr$ pairs.\\
{\em Step~3.} Bound the spectral gap of $P$
using the canonical paths method~\cite{Diaconis91,*Sinclair92}.

To construct a good family of canonical paths in Step~3 we will
organize Dyck paths into a rooted tree in which
level-$m$ nodes represent Dyck paths of length $2m$,
edges correspond to insertion of $lr$ pairs, and each node has at most four children.
Existence of such a tree will be proved using  the fractional matching method~\cite{Schrijver}.

Let $\calD_m$ be the set of Dyck paths of length $2m$,
$\calD$ be the union of all $\calD_m$ with $2m\le n$,
and $\calM_n$ be the set of Motzkin paths of length $n$.
Define a Dyck space $\calH_D$ whose basis vectors are Dyck paths $s\in \calD$.
Given a Motzkin path $u$ with $2m$ brackets, let $\mathrm{Dyck}(u)\in \calD_{m}$
be the Dyck path obtained from $u$ by removing zeros. We shall use an embedding
$V\, : \, \calH_D \to \calH_{M}$ defined as
\[
V\, |s\ra = \frac1{\sqrt{{n \choose 2m}}}\, \sum_{\substack{u\in \calM_n\\ \mathrm{Dyck}(u)=s\\}}\, |u\ra, \quad \quad s\in
\calD \cap \calD_m.
\]
One can easily check that $V^\dag V=I$, that is, $V$ is an isometry.
For any Hamiltonian $H$, let $\lambda_2(H)$ be the second smallest
eigenvalue of $H$.

\noindent
{\em Step~1.} The restriction of the Hamiltonian Eq.~(\ref{H}) onto the Motzkin space $\calH_M$
can be written as $H=H_{move}+H_{int}$,
where $H_{move}$ describes freely moving left and right brackets, while
$H_{int}$ is an `interaction term' responsible for pairs creation.
More formally, $H_{move}=\sum_{j=1}^{n-1} \Pi_{j,j+1}^{move}$ and
$H_{int}=\sum_{j=1}^{n-1} \Pi_{j,j+1}^{int}$, where
$\Pi^{move}$ projects onto the subspace spanned by $|0l\ra-|l0\ra$
and $|0r\ra - |r0\ra$, while $\Pi^{int}$
projects onto the state $|00\ra-|lr\ra$.
Note that the boundary terms in Eq.~(\ref{H}) vanish on $\calH_M$.
 We shall treat $H_{int}$ as a small
perturbation of $H_{move}$.  To this end define a modified FF Hamiltonian
$H_{\epsilon}=H_{move} + \epsilon H_{int}$,
where $0<\epsilon\le 1$ will be chosen later.
One can easily check that $|\calM_n\ra$ is the unique ground state of $H_\epsilon$
and $\lambda_2(H)\ge \lambda_2(H_\epsilon)$ (use the operator inequality $H\ge H_\epsilon$).
Note that $H_{move}\psi=0$ iff $\psi$ is symmetric under the moves $0l\leftrightarrow l0$
and $0r\leftrightarrow r0$. It follows that the ground subspace of $H_{move}$
is spanned by states $V\, |s\ra$ with $s\in \calD$.
To compute the spectrum of $H_{move}$, we can ignore the difference between
$l$'s and $r$'s since $H_{move}$ is only capable of swapping zeros with
non-zero letters. It follows that the spectrum of $H_{move}$
must coincide with the spectrum of the Heisenberg ferromagnetic spin-$1/2$ chain,
that is, $\Pi^{move}$ can be replaced by the projector onto the singlet state $|01\ra-|10\ra$,
where $1$ represents either $l$ or $r$. Using the exact formula for
the spectral gap of the Heisenberg chain found by Koma and Nachtergaele~\cite{Koma95} we arrive at
$\lambda_2(H_{move})=1-\cos{\left(\frac{\pi}{n}\right)}=\Omega(n^{-2})$.
Let
\[
H_{\eff}=V^\dag H_{int} V
\]
be the first-order effective Hamiltonian acting on the Dyck space $\calH_D$.
Applying the Projection Lemma of~\cite{KKR04} to the orthogonal complement of $|\calM_n\ra$ in $\calH_M$ we infer that
\be
\label{KKR}
\lambda_2(H_\epsilon)\ge \epsilon \lambda_2(H_{\eff})-\frac{O(\epsilon^2) \|H_{int}\|^2}{\lambda_2(H_{move})-2\epsilon\|H_{int}\| }.
\ee
Choosing $\epsilon\ll n^{-3}$ guarantees that $\epsilon\|H_{int}\|$ is small compared with $\lambda_2(H_{move})$.
For this choice of $\epsilon$ one gets
\be
\label{KKR1}
\lambda_2(H)\ge \lambda_2(H_\epsilon)\ge \epsilon \lambda_2(H_{\eff})- O(\epsilon^2 n^4).
\ee
Hence it suffices to prove that $\lambda_2(H_{\eff})\ge n^{-O(1)}$.

\noindent
{\em Step~2.} Recall that $H_{\eff}=V^\dag H_{int} V$ acts on the Dyck space $\calH_D$. Its unique
ground state $|\calD\ra\in \calH_D$ can be found by solving $|\calM_n\ra=V\,|\calD\ra$. It yields
\be
\label{Dstate}
|\calD\ra=\frac1{\sqrt{|\calM_n|}} \sum_{2m\le n} \sqrt{{n \choose 2m}} \sum_{s\in \calD_m} |s\ra.
\ee
Let $\pi(s)=\la s|\calD\ra^2$ be the induced probability distribution on $\calD$.
Given a pair of Dyck paths $s,t\in \calD$, define
\be
\label{walk}
P(s,t)=\delta_{s,t}-\frac1{n}\la s |H_{\eff} |t\ra \sqrt{\frac{\pi(t)}{\pi(s)}}.
\ee
We claim that $P$ describes a random walk on the set of Dyck paths $\calD$
such that $P(s,t)$ is a transition probability from $s$ to $t$,
and $\pi(s)$ is the unique steady state of $P$.
Indeed, since $\sqrt{\pi(t)}$ is a zero eigenvector of $H_{\eff}$, one has $\sum_t P(s,t)=1$
and $\sum_s \pi(s) P(s,t)=\pi(t)$.
Off-diagonal matrix elements $\la s|H_{\eff}|t\ra$ get contributions only from
terms $-|00\ra\la lr|$ and $-|lr\ra\la 00|$ in $H_{int}$. It implies that $\la s|H_{\eff}|t\ra\le 0$
for $s\ne t$ and hence $P(s,t)\ge 0$. Furthermore, $P(s,s)\ge 1/2$ since $\la s|H_{\eff}|s\ra \le n/2$.
In the Supplementary Material we shall prove the following.
\begin{lemma}
\label{lemma:transitions}
Let $s,t\in \calD$ be any Dyck paths such that
$t$ can be obtained from $s$  by adding or removing a single $lr$ pair. Then
$P(s,t)=\Omega(1/n^3)$. Otherwise $P(s,t)=0$.
\end{lemma}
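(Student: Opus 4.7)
The plan is to compute the off-diagonal matrix elements of $H_\eff = V^\dagger H_{int} V$ directly. First I would observe that the only off-diagonal contributions to $H_{int}$ in the Motzkin basis come from the terms $-\tfrac{1}{2}|00\ra\la lr| -\tfrac{1}{2}|lr\ra\la 00|$ sitting inside each $\Pi^{int}_{j,j+1}$; such a term connects two Motzkin paths that agree outside positions $j,j+1$ and differ there by $00\leftrightarrow lr$, so it shifts the number of brackets by exactly two. Pulling this observation through the isometry $V$ immediately shows that $\la s|H_\eff|t\ra=0$ whenever the lengths of $s$ and $t$ differ by anything other than two, and furthermore, when the lengths do differ by two, a nonzero matrix element forces $t$ to be obtainable from $s$ (or vice versa) by inserting a single $lr$ pair. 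This already settles the ``otherwise $P(s,t)=0$'' clause.

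For the quantitative bound I would assume $s\in\calD_m$ and $t\in\calD_{m+1}$ with $t$ equal to $s$ with an $lr$ inserted at some position. Expanding $V|s\ra$ and $V|t\ra$ as uniform superpositions over Motzkin paths of length $n$ with $2m$ and $2m+2$ brackets respectively, the matrix element takes the form
\[
\la s|H_\eff|t\ra \;=\; -\frac{1}{2}\,\frac{N(s,t)}{\sqrt{\binom{n}{2m}\binom{n}{2m+2}}},
\]
where $N(s,t)$ counts pairs consisting of a Motzkin expansion $v$ of $t$ and a position $j$ at which $v$ contains an $lr$ whose deletion returns a Motzkin expansion of $s$.

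The central combinatorial step is to evaluate $N(s,t)$. I would group the sum by the index $k$ of the bracket pair in $t$ being contracted; let $k(s,t)\ge 1$ denote the number of indices $k$ for which removing the $k$-th and $(k{+}1)$-th brackets of $t$ yields $s$. For each admissible $k$, the Motzkin expansions $v$ with the $k$-th and $(k{+}1)$-th brackets of $t$ placed at adjacent positions in $\{1,\ldots,n\}$ are in bijection, via the standard trick of merging the two consecutive positions into one, with size-$(2m+1)$ subsets of $\{1,\ldots,n-1\}$; there are $\binom{n-1}{2m+1}$ such placements. Hence $N(s,t)=k(s,t)\binom{n-1}{2m+1}$.

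Finally, substituting into $P(s,t)=-n^{-1}\la s|H_\eff|t\ra\sqrt{\pi(t)/\pi(s)}$ with $\pi(s)=\binom{n}{2m}/|\calM_n|$ and simplifying the binomial ratios collapses the expression to
\[
P(s,t) \;=\; \frac{k(s,t)\,(n-2m)(n-2m-1)}{2\,n^{2}\,(2m+1)},
\]
which is manifestly $\Omega(1/n^3)$ uniformly in $0\le m\le (n-2)/2$, the extremal case being $2m\approx n-2$. The opposite direction, with $t$ obtained from $s$ by deleting an $lr$ pair, is then handled by reversibility $\pi(s)P(s,t)=\pi(t)P(t,s)$, which yields the matching lower bound of the same order. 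I do not expect any serious difficulty --- the entire argument is a bookkeeping exercise --- but the linchpin is the clean identity $N(s,t)=k(s,t)\binom{n-1}{2m+1}$ that makes the binomial cancellations go through neatly.
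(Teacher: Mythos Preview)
Your argument is correct and in fact sharper than the paper's. Both proofs start from the same expression
\[
P(s,t)=-\frac1{n}\binom{n}{2m}^{-1}\sum_{u\in\calM_n[s]}\sum_{v\in\calM_n[t]}\la u|H_{int}|v\ra,
\]
but the paper does not attempt to evaluate the double sum. Instead it fixes one insertion position $j$ for the $lr$ pair and observes that at least a $1/n^2$ fraction of Motzkin expansions $u$ of $s$ have two consecutive zeros between $s_j$ and $s_{j+1}$, giving $P(s,t)\ge 1/(2n^3)$; the removal case is handled by an analogous but separate $1/n$ bound yielding $P(s,t)\ge 1/(2n^2)$. Your approach instead counts $N(s,t)$ exactly via the ``merge two adjacent slots into one'' bijection, producing the closed form $P(s,t)=k(s,t)(n-2m)(n-2m-1)/(2n^2(2m+1))$, and then reads off both directions at once via detailed balance. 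What you gain is an exact transition probability (and the correct extremal case $2m\approx n$), at the price of having to justify the bijection carefully; the paper's argument is coarser but entirely elementary and avoids any bijective bookkeeping. Either route suffices for the $\Omega(1/n^3)$ bound needed downstream.
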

Let $\lambda_2(P)$ be the second largest eigenvalue of $P$. From Eq.~(\ref{walk})
one gets $\lambda_2(H_{\eff})=n(1-\lambda_2(P))$.
Hence it suffices to prove that the random walk
$P$ has a polynomial spectral gap, that is, $1-\lambda_2(P)\ge n^{-O(1)}$.

\noindent
{\em Step~3.} Lemma~\ref{lemma:transitions} tells us that $P$ describes
a random walk on a graph $G=(\calD,E)$ where two Dyck
paths are connected by an edge, $(s,t)\in E$,
iff  $s$ and $t$ are related by insertion/removal of a single $lr$ pair.
To bound the spectral gap of $P$ we shall connect any pair
of Dyck paths $s,t\in \calD$ by a {\em canonical path} $\gamma(s,t)$,
that is, a sequence $s_0,s_1,\ldots,s_l\in \calD$ such that
$s_0=s$, $s_l=t$, and $(s_i,s_{i+1})\in E$ for all $i$.
The canonical paths theorem~\cite{Diaconis91,Sinclair92} shows that
$1-\lambda_2(P)\ge 1/(\rho l)$, where $l$ is the maximum length
of a canonical path and $\rho$ is the maximum edge load defined as
\be
\label{edge_load}
\rho=\max_{(a,b)\in E} \; \frac1{\pi(a) P(a,b)} \sum_{s,t\, : \, (a,b)\in \gamma(s,t)} \; \; \pi(s) \pi(t).
\ee
The key new result that allows us to choose a good family of canonical paths is the following.
\begin{lemma}
\label{lemma:map}
Let $\calD_k$ be the set of Dyck paths of length $2k$.
For any $k\ge 1$ there exists a map $f\, : \, \calD_k \to \calD_{k-1}$ such that
(i)  the image of any path $s\in \calD_k$ can be obtained from $s$ by
removing a single $lr$ pair,
(ii) any path $t\in \calD_{k-1}$ has at least one  pre-image in $\calD_k$, and
(iii) any path $t\in \calD_{k-1}$ has at most four  pre-images in $\calD_k$.
\end{lemma}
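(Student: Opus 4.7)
The plan is to recast Lemma~\ref{lemma:map} as the existence of a $b$-matching in a bipartite graph and to apply the integrality of the bipartite $b$-matching polytope, following the polyhedral methods of~\cite{Schrijver}. Introduce the bipartite graph $G=(\calD_k,\calD_{k-1},E)$ with an edge $(s,t)$ whenever $t$ is obtained from $s$ by deleting a single $lr$ pair. The required map $f$ is precisely an edge orientation in which every $s\in\calD_k$ has out-degree $1$ and every $t\in\calD_{k-1}$ has in-degree in $\{1,2,3,4\}$, so properties (i)--(iii) of the lemma translate into a single feasibility question.

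I would carry out the proof in four steps. First, encode $f$ as an integer flow in the network with source feeding capacity $1$ into each $s\in\calD_k$, edges $(s,t)\in E$ of capacity $1$, and each $t\in\calD_{k-1}$ draining to a sink with lower bound $1$ and upper bound $4$. Second, appeal to total unimodularity of the bipartite incidence matrix to equate integer and fractional feasibility. Third, by Hoffman's circulation theorem, reduce fractional feasibility to two Hall-type inequalities: (a) $|N_G(S)|\ge |S|/4$ for every $S\subseteq\calD_k$, and (b) $|N_G(T)|\ge |T|$ for every $T\subseteq\calD_{k-1}$, where in (b) $N_G(T)\subseteq\calD_k$ denotes the set of Dyck paths adjacent in $G$ to some $t\in T$. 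Condition (b) is immediate via the edge-preserving injection $t\mapsto t\cdot lr$ from $\calD_{k-1}$ into $\calD_k$. Fourth, once (a) is established, invoke integrality to promote the fractional solution to the required integer-valued map $f$.

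The principal obstacle is (a), the ``expansion'' condition. It is essentially tight on average, since $|\calD_k|/|\calD_{k-1}|=(4k-2)/(k+1)$ is close to $4$ and leaves almost no slack, while the naive edge-counting estimate $|N_G(S)|\ge |S|/(2k-1)$ is far too weak. To establish the uniform subset version, I plan to identify each Dyck path with a rooted planar tree (so that $lr$-pair removals become leaf prunings) and construct an explicit fractional assignment $x\colon E\to[0,1]$ with $\sum_{t} x(s,t)=1$ for every $s$ and $1\le \sum_{s} x(s,t)\le 4$ for every $t$, then conclude (a) via LP duality. A natural candidate distributes the unit weight at a tree $s$ among its leaves and verifies the load bound by induction on the first-subtree decomposition. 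This step, in which the global Catalan-level inequality must be upgraded to a uniform local one, is precisely the place where the polyhedral matching theory of Schrijver should intervene and where I expect the bulk of the technical work to lie.
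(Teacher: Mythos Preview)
Your strategy coincides with the paper's: both set up the bipartite graph on $\calD_k\cup\calD_{k-1}$, invoke integrality of the bipartite $b$-matching polytope, and reduce everything to exhibiting a fractional point. One structural remark: your steps~3--4 are a detour. If in step~5 you succeed in constructing $x\colon E\to[0,1]$ with $\sum_t x(s,t)=1$ and $1\le\sum_s x(s,t)\le 4$, that $x$ is already a point of the feasible polytope, so integrality (your step~2) finishes the proof immediately; there is no need to extract Hall-type inequalities and then re-enter via LP duality. The paper takes exactly this shortcut. Its Lemma~\ref{lemma:matching} proves that extreme points of the polytope are integral by the standard perturb-along-a-cycle-or-path argument, and its Lemma~\ref{lemma:small} constructs the fractional point inductively via the first-return decomposition $b=l\,s\,r\,t$ with $s\in\calD_i$, $t\in\calD_{k-i-1}$: one sends $b$ to $l\,f(s)\,r\,t$ with probability $p_i$ and to $l\,s\,r\,f(t)$ with probability $1-p_i$, and solves for the $p_i$ so that every $a\in\calD_{k-1}$ receives load exactly $X_k=C_k/C_{k-1}\in[1,4]$. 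The resulting closed form is $p_i=i(i+1)(3k-2i-1)/\bigl(k(k+1)(k-1)\bigr)$. Your ``uniform over leaves'' candidate is in the right spirit but does not achieve constant load (for instance, at $k=4$ the path $lllrrr$ already receives load $4$ while the average is $C_4/C_3=14/5$), so to complete your step~5 you will need a position-dependent weighting of this kind rather than the uniform one.
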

The lemma allows one to organize the set of all Dyck paths $\calD$ into a
{\em supertree}  $\calT$ such that the root of $\calT$ represents the empty path and such that children
of any node $s$ are elements of  $f^{-1}(s)$. The properties of $f$ imply that Dyck paths of length $2m$
coincide with level-$m$ nodes of $\calT$,
any step away from the root on $\calT$
corresponds to insertion of a single $lr$ pair,  and any node of $\calT$ has at most four children.
Hence the lemma provides a recipe for growing long Dyck paths
from short ones without overusing any intermediate Dyck paths. It should be noted
that restricting the maximum number of children to four is optimal
since $|\calD_k|=C_k\approx 4^k/\sqrt{\pi} k^{3/2}$, where $C_k$ is the $k$-th Catalan number.
Our proof of Lemma~\ref{lemma:map} based on the fractional matching method can be found in
the Supplementary Material.
Five lowest levels of the supertree $\calT$ are shown on Fig.~\ref{fig:supertree}.

\begin{center}
\begin{figure*}[t]
\centerline{
\mbox{
 \includegraphics[height=4.5cm]{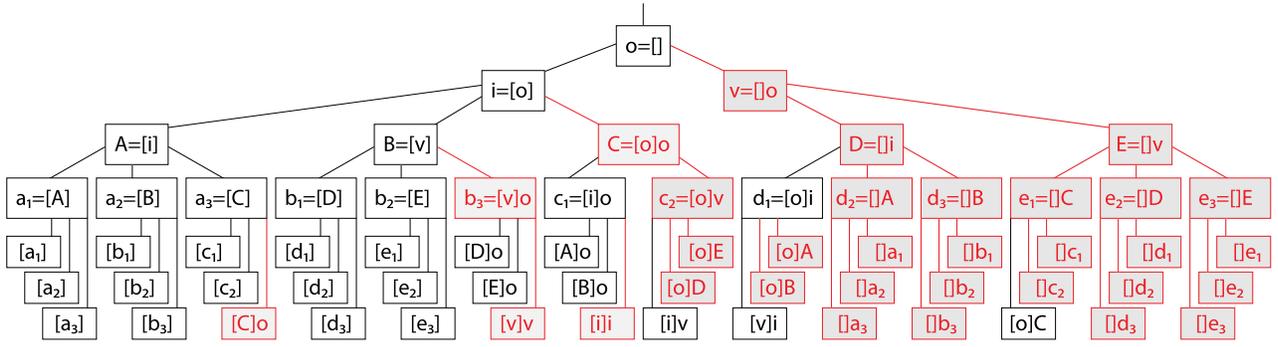}
 }}
\caption{(Color Online) Five lowest levels of the supertree $\calT$.
Nodes of $\calT$ are Dyck paths --- balanced strings of left and right brackets.
Depth-$k$ nodes are in one-to-one correspondence with Dyck paths
of length $2k$ (the set $\calD_k$). Any step towards the root requires removal of a consecutive $[\, ]$ pair.
Any node has at most four children.
The supertree can be described a family of maps $f\, :\, \calD_k\to \calD_{k-1}$
such that $f(s)$ is the parent of $s$. The maps $f$ are defined inductively such that
$f([X])=[f(X)]$, $f([\, ]Y)=[\, ]f(Y)$ for any node,  $f([X]Y)=[f(X)]Y$ for black nodes,
and $f([X]Y)=[X]f(Y)$ for red (shaded) nodes.
See the proof of Lemma~\ref{lemma:map} in the Supplementary Material for more details.
}
 \label{fig:supertree}
\end{figure*}
\end{center}

We can now define the canonical path $\gamma(s,t)$ from
$s\in \calD_m$ to $t\in \calD_k$.
Any intermediate state in $\gamma(s,t)$ will be represented as $uv$ where $u\in \calD_{l'}$ is
an ancestor of $s$ in the supertree  and $v\in \calD_{l''}$ is an ancestor of $t$.
The canonical path starts from $u=s$, $v=\emptyset$ and
alternates between shrinking $u$ and growing $v$ by making steps towards the root  (shrink)
and away from the root (grow) on the supertree. The path terminates as soon as $u=\emptyset$ and $v=t$.
The shrinking steps are skipped whenever $u=\emptyset$, while the growing steps
are skipped whenever $v=t$.
Note that any intermediate state $uv$ obeys
\be
\label{constraint}
\min{(|s|,|t|)} \le |u|+|v| \le \max{(|s|,|t|)}.
\ee
Since any path $\gamma(s,t)$ has length at most $2n$, it suffices to bound
the maximum edge load $\rho$.
Fix the edge $(a,b)\in E$ with the maximum load.
Let $\rho(m,k,l',l'')$ be the contribution to $\rho$ that comes from canonical paths
$\gamma(s,t)$ such that $a=uv\in \calD_{l'+l''}$, where
\[
s\in \calD_{m}, \quad t\in \calD_{k}, \quad u\in \calD_{l'}, \quad v\in \calD_{l''},
\]
and such that $b$ is obtained from $a$ by growing $v$ (the case when $b$ is obtained from
$a$ by shrinking  $u$ is analogous).
The number of possible source strings $s\in \calD_m$ contributing to $\rho(m,k,l',l'')$
is at most $4^{m-l'}$ since $s$ must be a descendant of $u$ on the supertree.
The number of possible target strings $t\in \calD_k$ contributing to $\rho(m,k,l',l'')$
is at most $4^{k-l''}$ since $t$ must be a descendant of $v$ on the supertree.
Taking into account that $\pi(s)$ and $\pi(t)$ are the same for all $s\in \calD_m$
and $t\in \calD_k$ we arrive at
\[
\rho(m,k,l',l'') \le 4^{m+k-l'-l''} \frac{\pi(s)\pi(t)}{\pi(a) P(a,b)}=\frac{\pi_m \pi_k}{\pi_{l'+l''} P(a,b)}
\]
with $\pi_l=4^l {n \choose 2l}/|\calM_n|$.
 Here we used the identity
$\pi(w)=\la w|\calD\ra^2$ and Eq.~(\ref{Dstate}).
Lemma~\ref{lemma:transitions} implies that $1/P(a,b)\le n^{O(1)}$.
Furthermore, the fraction of Motzkin paths of length $n$
that have exactly $2l$ brackets is
$\sigma_l=C_l {n \choose 2l}/|\calM_n|$.
However $C_l\approx 4^l/\sqrt{\pi} l^{3/2}$
coincides with $4^l$ modulo factors polynomial in $1/n$. Hence
\[
\rho(m,k,l',l'') \le n^{O(1)} \cdot \frac{\sigma_m \sigma_k}{\sigma_{l'+l''}}.
\]
By definition, $\sigma_l\le 1$ for all $l$.
Also, one can easily check that $\sigma_l$ as a function of $l$ has a unique
maximum at $l\approx n/3$ and decays monotonically away from the maximum.
Consider two cases. {\em Case~(1):}  $l'+l''$ is on the left from the maximum of $\sigma_l$.
From Eq.~(\ref{constraint}) one gets
$\min{(m,k)}\le l'+l''$ and thus
$\sigma_m\sigma_k\le \sigma_{\min{(m,k)}} \le \sigma_{l'+l''}$.
{\em Case~(2):} $l'+l''$ is on the right from the maximum of $\sigma_l$.
From Eq.~(\ref{constraint}) one gets
$\max{(m,k)}\ge l'+l''$ and thus
$\sigma_m\sigma_k\le \sigma_{\max{(m,k)}} \le \sigma_{l'+l''}$.
In both cases we get a bound $\rho(m,k,l',l'')\le n^{O(1)}$.
Since the number of choices for $m,k,l',l''$ is at most $n^{O(1)}$,
we conclude  that $\rho\le n^{O(1)}$ and thus $1-\lambda_2(P)\ge n^{-O(1)}$.

{\em Open problems.  \hspace{2mm}}
Our work raises several questions. First, one can ask
what is the upper bound on the ground state entanglement
of FF spin-$1$ chains and whether the Motzkin state achieves this bound.
For example, if the Schmidt rank $\chi(L)$ for a block of $L$ spins can only grow
polynomially with $L$, as it is the case for the Motzkin state, ground states
of FF spin-$1$ chains could be efficiently represented by Matrix Product States~\cite{Verstraete05}
(although finding such representation might be a computationally hard problem~\cite{Schuch08}).
One drawback of the model based on Motzkin paths is the need for boundary conditions
and the lack of the thermodynamic limit. It would be interesting to find examples of FF spin-$1$
chains with highly entangled ground states that are free from this drawback.
We also leave open the question of whether our model can indeed be regarded as a critical spin chain
in the sense that its continuous limit can be described by a conformal field theory.
Finally, an intriguing open question is whether long-range ground state entanglement (or steady-state
entanglement in the case of dissipative processes)
in 1D spin chains can be stable against more general local perturbations, such as external magnetic fields.

\section{Supplementary Material}

\subsection{Schmidt coefficients of the Motzkin state}

In this section we compute the Schmidt coefficients $p_m$ defined in Eq.~(\ref{Schmidt1})
and show that for large $n$ and $m$ one can use an approximation
\be
\label{approx}
p_m\sim m^2 \exp{(-3m^2/n)}.
\ee
Let $\calD_{n,k}\subseteq \{l,r\}^{2n+k}$ be the set of balanced
strings of left and right brackets of length $2n+k$ with $k$ extra left
brackets. More formally, $s\in \calD_{n,k}$ iff
any initial  segment of $s$ contains at least as many $l$'s as $r$'s,
and the total number of $l$'s is equal to $n+k$.
\begin{lemma}[\bf Andr\'e's reflection method]
\label{lemma:reflection}
The total number of strings in $\calD_{n,k}$ is
\[
D_{n,k}=\frac{k+1}{n+k+1} {2n+k \choose n}.
\]
\end{lemma}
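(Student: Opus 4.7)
The plan is to prove this classical identity by the standard reflection argument, viewing strings in $\calD_{n,k}$ as lattice paths and counting the complement of ``good'' strings inside the set of all arrangements. The total number of words with $n+k$ left brackets and $n$ right brackets is obviously ${2n+k \choose n}$, so it suffices to show that the number of ``bad'' words, meaning those having some prefix with strictly more $r$'s than $l$'s, equals ${2n+k \choose n-1}$. The identity then follows from the algebraic simplification
\[
{2n+k \choose n} - {2n+k \choose n-1} = {2n+k \choose n}\left(1 - \frac{n}{n+k+1}\right) = \frac{k+1}{n+k+1}{2n+k \choose n}.
\]

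To establish the bijection, I would proceed as follows. Given a bad string $s$ of length $2n+k$ with $n+k$ left brackets and $n$ right brackets, let $j$ be the smallest index for which $R_j(s) - L_j(s) = 1$, which exists by badness and the fact that the prefix difference changes by $\pm 1$ per step. Let $\sigma(s)$ be the string obtained from $s$ by swapping $l \leftrightarrow r$ in positions $1,\ldots,j$. The prefix of $\sigma(s)$ of length $j$ has exactly one more $l$ than $r$, and the suffix is unchanged, so $\sigma(s)$ has total letter counts $(n+k+1)$ left brackets and $(n-1)$ right brackets.

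For the inverse, start with any word $t$ of length $2n+k$ having $n+k+1$ $l$'s and $n-1$ $r$'s. Since the overall excess of $l$'s over $r$'s is $k+2$, there must be a smallest index $j$ where $L_j(t) - R_j(t) = k+1$; swapping $l \leftrightarrow r$ in positions $1,\ldots,j$ recovers a bad string in $\calD_{n,k}$'s ambient set whose associated distinguished index is precisely $j$. Verifying that these two operations are mutually inverse is the only nontrivial bookkeeping; the key point to check is that the distinguished index $j$ is genuinely the same one before and after reflection, which holds because a swap at position $j$ converts the condition ``first index with $R - L = 1$'' into the condition ``first index with $L - R = k+1$'' in the reflected string, and conversely.

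The counting step is then immediate: bad strings are in bijection with a set of size ${2n+k \choose n-1}$, so
\[
D_{n,k} = {2n+k \choose n} - {2n+k \choose n-1} = \frac{k+1}{n+k+1}{2n+k \choose n},
\]
using $\binom{2n+k}{n-1} = \frac{n}{n+k+1}\binom{2n+k}{n}$. I do not anticipate a real obstacle; the only subtlety is bookkeeping for the reflection bijection, in particular verifying that it maps between exactly the sets claimed and is involutive in the appropriate sense. Everything else is routine binomial arithmetic.
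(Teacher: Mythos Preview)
Your overall strategy is the classical Andr\'e reflection, essentially the same as the paper's (the paper reflects the suffix after the first unmatched $r$, you reflect the prefix through it; both are standard variants and both land in a set of size $\binom{2n+k}{n-1}$). However, your inverse map is stated incorrectly and does not work as written.

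After your forward map, the distinguished index $j$ in $\sigma(s)$ satisfies $L_j(\sigma(s)) - R_j(\sigma(s)) = 1$ (you even say this yourself: ``the prefix of $\sigma(s)$ of length $j$ has exactly one more $l$ than $r$''), and for $i<j$ one has $L_i(\sigma(s)) - R_i(\sigma(s)) = R_i(s) - L_i(s) \le 0$. Thus $j$ is the \emph{first} index with $L-R=1$ in $\sigma(s)$, not with $L-R=k+1$; your assertion that reflecting the prefix converts ``first index with $R-L=1$'' into ``first index with $L-R=k+1$'' is false for $k\ge 1$. Concretely, take $n=1$, $k=1$: the unique bad string is $rll$, your forward map sends it to $lll$, but your inverse locates $j=2$ (first index with $L-R=2$) and reflects to $rrl$, which does not even have the correct letter counts. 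The fix is easy: in the inverse, look for the first index with $L-R=1$, not $k+1$. With that correction the two maps are mutually inverse and the count $\binom{2n+k}{n-1}$ for bad strings follows exactly as you wrote.
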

\begin{proof}
For any bracket string $s$ let $L(s)$ and $R(s)$ be the number of left and right brackets in $s$.
Any $s\in  \{l,r\}^{2n+k}$ such that $s\notin \calD_{n,k}$ can be uniquely
represented as $s=u r v$, where $r$ corresponds to the first unmatched right bracket
in $s$, while $u$ is a balanced string (Dyck path). Let $v'$ be a string obtained from $v$ by a reflection $r\leftrightarrow l$
and $s'=urv'$. Then
\[
L(s')=L(u)+L(v')=R(u)+R(v)=R(s)-1=n-1
\]
and
\bea
R(s')&=& R(u)+1+R(v')=L(u)+1+L(v)=L(s)+1 \nn \\
&=& n+k+1. \nn
\eea
Furthermore, any string $s'$ with $n-1$ left brackets and $n+k+1$ right bracket
can be uniquely represented as $s'=urv'$. Hence the number of strings in $\calD_{n,k}$ is
\[
D_{n,k}={2n+k \choose n} -{2n+k \choose n-1}=\frac{k+1}{n+k+1} {2n+k \choose n}.
\]
\end{proof}
One can easily check that $|C_{p,q}(n)|=|C_{0,p+q}(n)|$ since the identity of unmatched brackets
does not matter for the counting.
Let
\[
M_{n,m}\equiv |C_{0,m}(n)|.
\]
Lemma~\ref{lemma:reflection} implies that
\[
M_{n,m}=
\sum_{\substack{i\ge 0 \\ 2i+m\le n\\}} \;  \frac{m+1}{i+m+1} {n \choose 2i+m} {2i+m \choose i}.
\]
It can be rewritten as
\be
\label{Mnm}
M_{n,m}=\sum_{\substack{i\ge 0 \\ 2i+m\le n\\}} M_{n,m,i},
\ee
where
\be
M_{n,m,i}=\frac{(m+1)\cdot n!}{(i+m+1)!i!(n-2i-m)!}.
\ee
Let $\alpha=m/\sqrt{n}$ and $\beta=(i-n/3)/\sqrt{n}$.
Using Stirling's formula one can get
\be
M_{n,m,i}\approx
\frac{3\sqrt{3}}{2\pi n^{3/2}}3^{n+1}\alpha\exp\left(-3\alpha^{2}-9\alpha\beta-9\beta^{2}\right).
\ee
We approximate the sum in Eq.~(\ref{Mnm}) by integrating over $i$. Since  $i=\frac{n}{3}+\beta\sqrt{n}$,
we get $di=\sqrt{n}d\beta$, Since the maximum is near $i=\frac{n}{3}$,
we can turn this sum into an integral from $-\infty$ to $\infty$.
The integral we need to evaluate is thus
\begin{eqnarray*}
M_{n,m} & \approx & \frac{3\sqrt{3}}{2\pi n^{3/2}}3^{n+1}\alpha\int_{-\infty}^{\infty}e^{-3\alpha^{2}-9\alpha\beta-9\beta^{2}}\sqrt{n}d\beta\\
 & = & \frac{3\sqrt{3}}{2\pi n}3^{n+1}\alpha\int_{-\infty}^{\infty} e^{-9\left(\beta-\alpha/2\right)^{2}-\frac{3}{4}\alpha^{2}}\sqrt{n}d\beta\\
 & = & \frac{\sqrt{3}}{2\sqrt{\pi}n}3^{n+1}\alpha e^{-\frac{3}{4}\alpha^{2}}
 \sim m \exp{(-3m^2/4n)}.
\end{eqnarray*}
Recalling that
\be
p_m = \frac{|C_{0,m}(n/2)|^2}{|C_{0,0}(n)|} \sim M_{n/2,m}^2
\ee
 we arrive at Eq.~(\ref{approx}).

\subsection{Proof of Lemma~\ref{lemma:transitions}}
Suppose $s\in \calD_k$ and $t\in \calD_{k\pm1}$.
Using the definition of $P(s,t)$ one can easily get
\[
P(s,t)=-\frac1{n}{n \choose 2k}^{-1} \sum_{u\in \calM_n[s]}\; \;  \sum_{v\in \calM_n[t]} \la u|H_{int} |v\ra.
\]
Here $\calM_n[s]=\{ u\in \calM_n\, : \,  \mathrm{Dyck}(u)=s\}$.
Note that $\la u|H_{int} |v\ra=-1/2$ if $u$ and $v$ differ exactly at two consecutive positions
where $u$ and $v$ contain $00$ and $lr$ respectively or vice versa.
In all other cases one has $\la u|H_{int} |v\ra=0$.

Suppose $t\in \calD_{k+1}$ and $P(s,t)>0$.
Let us fix some $j\in [0,2k]$ such that
$t$ can be obtained from $s$ by inserting a pair $lr$ between $s_j$ and $s_{j+1}$.
For any string $u\in \calM_n[s]$ in which $s_j$ and $s_{j+1}$ are separated by at least two
zeros one can find at least one $v\in \calM_n[t]$ such that
$\la u|H_{int} |v\ra=-1/2$. The fraction of strings $u\in \calM_n[s]$
in which $s_j$ and $s_{j+1}$ are separated by two or more
zeros is at least $1/n^2$ which implies
\[
P(s,t)\ge \frac1{2n^3}.
\]
Suppose now that $t\in \calD_{k-1}$.
Let us fix some $j\in [1,2k-1]$ such that $t$ can be obtained from $s$
by removing the pair $s_j s_{j+1}=lr$. For any string $u\in \calM_n[s]$ in which $s_j$ and $s_{j+1}$
are not separated by zeros one can find at least one $v\in \calM_n[t]$ such that
$\la u|H_{int} |v\ra=-1/2$. The fraction of strings $u\in \calM_n[s]$
in which $s_j$ and $s_{j+1}$ are not separated by
zeros is at least $1/n$ which implies
\[
P(s,t)\ge \frac1{2n^2}.
\]

\subsection{Proof of Lemma~\ref{lemma:map}}

Let us first prove a simple result concerning fractional matchings.
Consider a bipartite graph $G=(A\cup B,E)$.
Let $x=\{x_e\}_{e\in E}$ be a vector of real variables
associated with edges of the graph. For any vertex $u$ let $\delta(u)$ be the set
of edges incident to $u$.
Define a {\em matching polytope}~\cite{Schrijver}
\bea
\calP&=& \{ x\, : \, x_e\ge 0 \quad \mbox{for all $e\in E$}, \nn \\
&& 1\le \sum_{e\in \delta(a)} x_e \le 4,  \quad
\sum_{e\in \delta(b)} x_e=1 \nn \\
&& \mbox{for all $a\in A$ and $b\in B$}\}. \nn
\eea
\begin{lemma}
\label{lemma:matching}
Suppose $\calP$ is non-empty. Then there exists a map $f\, : \, B\to A$ such that
(i) $f(b)=a$ implies $(a,b)\in E$, (ii) any vertex $a\in A$ has at least one pre-image in $B$,
and (iii) any vertex $a\in A$ has at most four pre-images in $B$.
\end{lemma}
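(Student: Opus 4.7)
The plan is to reduce the combinatorial statement to the classical fact that the matching polytope of a bipartite graph is integral, so that non-emptiness of $\calP$ automatically gives an integer point, which in turn encodes the desired map $f$.

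First I would rewrite $\calP$ in the form $\{x\in\RR^E : Mx\le b,\; x\ge 0\}$ by splitting each equality $\sum_{e\in\delta(b)} x_e=1$ into two inequalities. The constraint matrix $M$ is then obtained from the edge--vertex incidence matrix of $G$. Since $G$ is bipartite, this incidence matrix is totally unimodular --- this is the standard Poincar\'e/Hoffman--Kruskal fact for bipartite graphs, and it is one of the cornerstones of the polyhedral theory of matchings cited in~\cite{Schrijver}. The right-hand-side vector $b$ has only integer entries ($0$, $1$, $-1$, $4$). Hoffman's theorem on totally unimodular systems then implies that every vertex of $\calP$ is integer-valued. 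Note also that $\calP$ is bounded: the equality $\sum_{e\in\delta(b)}x_e=1$ combined with $x_e\ge 0$ forces $x_e\in[0,1]$ for every $e\in E$. Hence if $\calP$ is non-empty, it contains an integer extreme point $x^\ast\in\{0,1\}^E$.

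Second I would read $f$ off from $x^\ast$. For each $b\in B$, the constraint $\sum_{e\in\delta(b)} x^\ast_e = 1$ with $x^\ast_e\in\{0,1\}$ singles out exactly one edge $e^\ast(b)=(a^\ast(b),b)\in\delta(b)$ on which $x^\ast$ takes the value $1$. Define $f(b):=a^\ast(b)$, which is well-defined and automatically satisfies property~(i). For any $a\in A$ the cardinality of the pre-image is
\[
|f^{-1}(a)| \;=\; \sum_{e\in\delta(a)} x^\ast_e,
\]
which is integer (since $x^\ast$ is integer) and lies in $[1,4]$ by the degree constraints defining $\calP$. This yields properties~(ii) and~(iii) simultaneously.

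The only step that is not completely mechanical is invoking integrality of $\calP$. However, this is precisely the content of the classical bipartite $b$-matching theorem, so the lemma reduces to a clean citation of~\cite{Schrijver}. I do not expect any genuine obstacle inside the proof of Lemma~\ref{lemma:matching} itself; the serious combinatorial work is deferred to the use of Lemma~\ref{lemma:matching} in proving Lemma~\ref{lemma:map}, where one must exhibit a feasible fractional $x$ on the bipartite graph with $A=\calD_{k-1}$, $B=\calD_k$, and edges given by $lr$-insertions --- an existence statement whose verification relies on counting inequalities between Catalan numbers.
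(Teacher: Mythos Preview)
Your argument is correct. Once you recognise the constraint matrix as (rows of) the bipartite edge--vertex incidence matrix, total unimodularity plus an integral right-hand side forces every extreme point of $\calP$ to lie in $\{0,1\}^E$, and reading off $f$ is immediate. The boundedness step you include is also necessary and correct.

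The paper, however, does \emph{not} invoke total unimodularity as a black box. It gives a self-contained extreme-point argument: pick an extreme point $x^\ast$ of $\calP$, show that its support $E^\ast$ must be a forest (any cycle, being even in a bipartite graph, would allow an $\pm\epsilon$ alternating perturbation that keeps $x^\ast$ in $\calP$, contradicting extremality), and then show that all edges in $E^\ast$ carry weight exactly $1$ (a maximal path in the sub-forest of edges with $0<x_e^\ast<1$ would have both endpoints in $A$, again permitting an alternating perturbation). This is essentially an \emph{ad hoc} reproof of integrality for this particular transportation-type polytope.

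Your route is shorter and more conceptual, at the cost of importing a nontrivial theorem; the paper's route is longer but elementary and keeps the article self-contained. Either is perfectly acceptable here. Your closing remark that the real work is in exhibiting a feasible fractional point for Lemma~\ref{lemma:map} is accurate.
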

\begin{proof}
Since $\calP$ is non-empty, it must have at least one extremal point $x^*\in \calP$.
Let $E^*\subseteq E$ be the set of edges such that $x^*_e>0$. We claim that $E^*$ is a forest
(disjoint union of trees). Indeed,
suppose $E^*$ contains a cycle $Z$ (a closed path). Then $x^*_{a,b}<1$ for
all $(a,b)\in Z$ since otherwise the cycle would terminate at $b$.
Hence $0<x^*_e<1$ for all $e\in Z$.
Since the graph is bipartite, one can label edges of $Z$ as even and odd in alternating order.
There exists $\epsilon\ne 0$ such that $x^*$ can
be shifted by $\pm \epsilon$ on even and odd edges of $Z$  respectively without leaving
$\calP$. Hence $x^*$ is a convex mixture of two distinct vectors from $\calP$.
This is a  contradiction since $x^*$ is an extreme point. Hence $E^*$ contains no cycles, that is,
$E^*$ is a forest. We claim that $x_e^*=1$ for all $e\in E^*$. Indeed, let $T\subseteq E^*$
be the subset of edges with $0<x_e^*<1$. Obviously, $T$ itself is a forest. Degree-$1$ nodes
of $T$ must be in $A$ and there must exist a path $\gamma\subseteq T$ starting
and ending at degree-$1$ nodes $u,u'\in A$. Since $0<x^*_e<1$ for all $e\in \gamma$,
there exists $\epsilon\ne 0$ such that $x^*$ can be shifted by $\pm \epsilon$ on
even and odd edges of $\gamma$ respectively without leaving $\calP$.
This is a  contradiction since $x^*$ is an extreme point. Hence $x_e^*=1$ for all $e\in E^*$.
We conclude that $x^*_e\in \{0,1\}$ for all edges of $G$.
The desired map can now be defined as $f(b)=a$ iff $x^*_{a,b}=1$.
\end{proof}

We are interested in the case where
\[
A=\calD_{n-1} \quad \mbox{and} \quad  B=\calD_n
\]
are Dyck paths of semilength $n-1$ and $n$ respectively. Paths $a\in \calD_{n-1}$ and $b\in \calD_n$ are
connected by an edge iff $a$ can be obtained from $b$ by removing a single $lr$ pair.
Our goal is to construct a map $f\,: \, \calD_n\to \calD_{n-1}$
with the properties (i),(ii),(iii) stated in Lemma~\ref{lemma:matching}.
According to the lemma, it suffices to choose
$f$ as a stochastic map. Namely, for any $b\in \calD_n$ we shall define a random
variable $f(b)\in \calD_{n-1}$ with some normalized probability distribution.
It suffices  to satisfy two conditions:
\be
\label{goal1}
\mathrm{Pr}[f(b)=a]>0 \quad \parbox[t]{5cm}{only if $a$ can be obtained from $b$ by removing a single $lr$ pair,}
\ee
and
\be
\label{goal2}
\sum_{b\in \calD_n} \mathrm{Pr}[f(b)=a]=X_n \quad \quad \mbox{for all $a\in \calD_{n-1}$}.
\ee
Here $1\le X_n\le 4$ is some function of $n$ that we shall choose later.
We shall define $f$ using induction in $n$.
\begin{lemma}
\label{lemma:small}
For every $n\ge 1$ there exists $1\le X_n\le 4$ and a stochastic map $f\, : \, \calD_n \to \calD_{n-1}$
satisfying Eqs.~(\ref{goal1},\ref{goal2}).
\end{lemma}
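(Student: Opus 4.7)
My plan is induction on $n$. The base case $n = 1$ is trivial: set $f_1([\,]) = \emptyset$ with probability one, giving $X_1 = 1$. For the inductive step, I assume the lemma for every $k < n$ with uniform load $X_k = C_k/C_{k-1} = (4k-2)/(k+1) \in [1,4)$, and put $X_n := C_n/C_{n-1}$. Every nonempty $b \in \calD_n$ decomposes uniquely at its first outermost matched pair as $b = [X]Y$ with $X \in \calD_j$, $Y \in \calD_{n-1-j}$. I define $f_n$ by the rules in Fig.~\ref{fig:supertree}: deterministically, $f_n([\,]Y) := [\,]f_{n-1}(Y)$ and $f_n([X]) := [f_{n-1}(X)]$ when $X$ or $Y$ is empty; and when both are nonempty, I declare $b$ ``black'' with probability $q_j$ (so $f_n(b) = [f_j(X)]Y$) or ``red'' with probability $1 - q_j$ (so $f_n(b) = [X]f_{n-1-j}(Y)$), where $q_j \in [0,1]$ depends only on $j = |X|$ and is to be fixed below.

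To fix the $q_j$'s, I decompose $a = [U]V \in \calD_{n-1}$ with $U \in \calD_i$ and $V \in \calD_{n-2-i}$, enumerate the preimages of $a$, and invoke the induction hypothesis $\sum_X \mathrm{Pr}[f_k(X) = U] = X_k$. The requirement $\sum_b \mathrm{Pr}[f_n(b) = a] = X_n$ then reduces to three families of equations: (a) $q_1 = X_n - X_{n-1} = 6/[n(n+1)]$ when $U = \emptyset$; (b) $q_{n-2} = 1 - (X_n - X_{n-1})$ when $V = \emptyset$; and (c) the linear recurrence $q_j X_j + (1 - q_{j-1}) X_{n-j} = X_n$ for $2 \le j \le n - 2$ when both $U, V$ are nonempty.

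The main technical step will be to verify that the forward iteration from (a) via (c) yields values in $[0,1]$ and is automatically compatible with (b). Multiplying (c) through by $C_{j-1} C_{n-j-1}$ and using $X_k C_{k-1} = C_k$ together with the algebraic identity $X_n - X_{n-k} = 6k/[(n+1)(n-k+1)]$, I obtain a first-order recurrence for $A_j := q_j C_j C_{n-j-1}$, which telescopes to the manifestly nonnegative closed form
\begin{equation*}
q_j \, C_j C_{n-j-1} \;=\; \sum_{k=1}^{j} (X_n - X_{n-k})\, C_{k-1} C_{n-k-1}.
\end{equation*}
Hence $q_j \ge 0$. For the upper bound, I plan to run the same telescoping for $B_j := A_j + A_{n-1-j}$: a short calculation shows $B_j - B_{j-1} = C_j C_{n-j-1} - C_{j-1} C_{n-j}$, and the boundary $B_0 = A_{n-1} = C_{n-1}$ follows from the Catalan convolution $\sum_{m=0}^{n-2} C_m C_{n-2-m} = C_{n-1}$. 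Telescoping yields $A_j + A_{n-1-j} = C_j C_{n-j-1}$, i.e., the symmetry $q_j + q_{n-1-j} = 1$, which both provides the bound $q_j \le 1$ and makes (b) consistent with (a) and (c). This completes the inductive step, with $X_n = (4n-2)/(n+1) \in [1,4)$ as required.
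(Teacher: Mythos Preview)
Your argument is correct and follows the paper's approach almost verbatim: the same inductive construction of $f$ via the first-return decomposition $b=[X]Y$, the same choice $X_n=C_n/C_{n-1}$, and the same three constraints (your (a),(b),(c) are the paper's (C1),(C2),(C3)). The only difference is in how the solution is verified: the paper simply exhibits the closed form $p_i=i(i+1)(3n-2i-1)/[n(n+1)(n-1)]$ and asserts it satisfies the system together with the symmetry $p_i+p_{n-i-1}=1$, whereas you \emph{derive} that symmetry by telescoping $A_j=q_jC_jC_{n-j-1}$ and invoking the Catalan convolution, which is arguably more self-contained but amounts to the same thing.
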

\begin{proof}
Any Dyck path $b\in \calD_n$ can be uniquely represented as $b=lsrt$
for some $s\in \calD_i$, $t\in \calD_{n-i-1}$, and $i\in [0,n-1]$.
We shall specify the map $f\, : \, \calD_n\to \calD_{n-1}$ by the following rules:
\begin{center}
\begin{tabular}{c|c|c}
$b\in \calD_n$ & $f(b)\in \calD_{n-1}$ & probability \\
\hline
$lsrt$, $s\in \calD_i$, $1\le i\le n-2$ & $lf(s) r t$ & $p_i$ \\
\hline
$lsrt$, $s\in \calD_i$, $1\le i\le n-2$ & $lsr f(t)$ & $1-p_i$ \\
\hline
$lrt$, $t\in \calD_{n-1}$ & $lrf(t)$ & $1$ \\
\hline
$lsr$, $s\in \calD_{n-1}$ & $lf(s)r$ & $1$ \\
\hline
\end{tabular}
\end{center}
Here we assumed that $f$ has been already defined for strings of semilength up to $n-1$
such that Eqs.~(\ref{goal1},\ref{goal2}) are satisfied.
By abuse of notation, we ignore the index $n$ in $f$, so we regard $f$ as a family of maps defined for all $n$.
It is clear that our inductive definition of $f$ on $\calD_n$ satisfies Eq.~(\ref{goal1}).
The probabilities $p_1,\ldots,p_{n-2}\in [0,1]$ are free parameters that must be chosen to satisfy
Eq.~(\ref{goal2}). Note that these probabilities also implicitly depend on $n$.
The  choices of $f(b)$ in the first two lines of the above table are represented by
black and red nodes in the example shown on Fig.~\ref{fig:supertree}.
Consider three cases:

{\em Case~1:} $a=lrt'$ for some $t'\in \calD_{n-2}$. Then $f(b)=a$ iff
$b=llrrt'$ or $b=lrt$ for some $t\in \calD_{n-1}$ such that $f(t)=t'$.
These possibilities are mutually exclusive.
Hence
\bea
 \sum_{b\in \calD_n} \mathrm{Pr}[f(b)=lrt'] &=& p_1 + \sum_{t\in \calD_{n-1}} \; \mathrm{Pr}[f(t)=t'] \nn   \\
&=&   p_1+X_{n-1}. \nn
\eea
Substituting it into Eq.~(\ref{goal2}) gives a constraint
\be
\label{C1}
p_1= X_n- X_{n-1}.
\ee

{\em Case~2:} $a=ls'r$ for some $s'\in \calD_{n-2}$. Then $f(b)=a$ iff
$b=ls'rlr$ or $b=lsr$ for some $s\in \calD_{n-1}$ such that $f(s)=s'$.
These possibilities are mutually exclusive. Hence
\bea
\sum_{b\in \calD_n} \mathrm{Pr}[f(b)=ls'r] &=&  1-p_{n-2} +
\sum_{s\in \calD_{n-1}} \mathrm{Pr}[f(s)=s']    \nn \\
&= &   1- p_{n-2} + X_{n-1}.  \nn
\eea
Substituting it into Eq.~(\ref{goal2}) gives a constraint
\be
\label{C2}
p_{n-2}=1 -(X_n-X_{n-1}).
\ee
It says that $X_n$ must be a non-decreasing sequence.

{\em Case~3:} $a=ls'rt'$ for some $s'\in \calD_i$, $t'\in \calD_{n-i-2}$, and $i=1,\ldots,n-3$.
In other words, both $s'$ and $t'$ must be non-empty. Then $f(b)=a$ iff
$b=lsrt'$ for some $s\in \calD_{i+1}$ such that $f(s)=s'$,
or $b=ls'rt$ for some $t\in \calD_{n-i-1}$ such that $f(t)=t'$.
These possibilities are mutually exclusive. Hence
\bea
\sum_{b\in \calD_n} \mathrm{Pr}[f(b)=ls'rt'] &=&
p_{i+1}\sum_{s\in \calD_{i+1}} \mathrm{Pr}[f(s)=s'] \nn \\
&& + (1-p_i) \sum_{t\in \calD_{n-i-1}} \mathrm{Pr}[f(t)=t'] \nn \\
&= &   p_{i+1}\, X_{i+1}  +  (1-p_i) X_{n-i-1}.\nn
\eea
Substituting it into Eq.~(\ref{goal2}) gives a constraint
\be
\label{C3}
 p_{i+1} \, X_{i+1}  +  (1-p_i)  X_{n-i-1} =X_n
\ee
for each $i=1,\ldots,n-3$.
Let us choose
\be
\label{Xi}
X_i=\frac{C_i}{C_{i-1}} =\frac{4(i-1/2)}{i+1}.
\ee
Combining Eqs.~(\ref{C1},\ref{C2},\ref{C3})
we obtain a linear system with unknown variables $p_1,\ldots,p_{n-2}\in [0,1]$.
We shall look for a solution $\{p_i\}$ having an extra symmetry
\be
\label{symmetry}
p_i+p_{n-i-1}=1 \quad \mbox{for $i=1,\ldots,n-2$}.
\ee
One can check that the system defined by Eqs.~(\ref{C1},\ref{C2},\ref{C3},\ref{symmetry})
has a solution
\be
p_i=\frac{i(i+1)(3n-2i-1)}{n(n+1)(n-1)}, \quad i=1,\ldots,n-2.
\ee
Hence we have defined the desired stochastic map $f\, :\, \calD_n\to \calD_{n-1}$.
This proves the induction hypothesis.

It remains to note that for $n=1,2$ the map $f$ is uniquely specified by Eqs.~(\ref{goal1},\ref{goal2})
and our choice of $X_n$. Indeed, one has $\calD_2=\{llrr,lrlr\}$, $\calD_1=\{lr\}$, and $\calD_0=\emptyset$.
To satisfy Eq.~(\ref{goal1}),  we have to choose $f(llrr)=f(lrlr)=lr$ for $n=2$ and $f(lr)=\emptyset$ for $n=1$.
It also satisfies Eq.~(\ref{goal2}) since $X_2=2$ and $X_1=1$.
This proves the base of induction.
\end{proof}

\subsection{Ground state energy for unbalanced subspaces}

Recall that the unbalanced subspace $\calH_{p,q}$ is spanned by strings
$s\in C_{p,q}$ that have $p$ unmatched right and $q$ unmatched left brackets.
Our goal is to prove that the restriction of $H$ onto any subspace $\calH_{p,q}$
with $p>0$ or $q>0$ has ground state energy at least $n^{-O(1)}$.
By the symmetry, it suffices to consider the case $p>0$.
To simplify the analysis we shall omit the boundary term $|l\ra\la l|_n$.
Note that such omission can only decrease the ground state energy. Accordingly, our
simplified  Hamiltonian becomes
\be
\label{Hsimplified}
H=|r\ra \la r|_1 + \sum_{j=1}^{n-1} \Pi_{j,j+1}.
\ee
Recall that $\Pi$ is a projector  onto the subspace spanned by states
$|00\ra -|lr\ra$, $|0l\ra-|l0\ra$, and $|0r\ra-|r0\ra$.
Let $\lambda_1(H)$ be the ground state energy of $H$.

Any string $s\in C_{p,q}$ can be uniquely represented as
\[
s=u_0 r u_1 r u_2 \ldots r u_p l v_1 l v_2 \ldots l v_q
\]
where $u_i$ and $v_j$ are Motzkin paths (balanced strings of brackets).
The remaining $p$ right and $q$ left brackets are unmatched
and never participate in the move $00\leftrightarrow lr$.
It follows that the  unmatched brackets can be regarded
as ``solid walls'' that can be swapped with $0$'s but otherwise
do not participate in any interactions. In particular, the
spectrum of $H$ restricted to $\calH_{p,q}$ depends only on $p+q$ as long as $p>0$.
This allows us to focus on the case $q=0$, i.e.
assume that all unmatched brackets are right.

Given a string $s\in C_{p,0}$, let $\tilde{s}\in  \{0,l,r,x,y\}^n$ be the string
obtained from $s$ by the following operations: (i) replace the first unmatched right
bracket in $s$ by `$x$', and (ii) replace all other unmatched brackets in $s$ (if any) by `$y$'.
Define a new Hilbert space $\tilde{\calH}_p$ whose basis vectors are $|\tilde{s}\ra$, $s\in C_{p,0}$.
Consider a Hamiltonian
\be
\label{Htilde}
\tilde{H}=|x\ra\la x|_1 + \sum_{j=1}^{n-1} \Pi_{j,j+1} + \Theta^x_{j,j+1} + \Theta^y_{j,j+1},
\ee
where
$\Theta^x$ and $\Theta^y$ are  projectors onto the states
$|0x\ra-|x0\ra$ and $|0y\ra-|y0\ra$ respectively (with a proper normalization).
One can easily check that $\la s|H|t\ra =\la \tilde{s}|\tilde{H}|\tilde{t}\ra$ for any $s,t\in C_{p,0}$.
Hence the spectrum of $H$ on $\calH_{p,0}$ coincides with the spectrum of $\tilde{H}$
on $\tilde{\calH}_p$. Furthermore, if we omit all the terms $\Theta^y_{j,j+1}$ in $\tilde{H}$, the ground state energy
can only decrease. Hence it suffices to consider a simplified Hamiltonian
\be
\label{Hx}
H^x=|x\ra\la x|_1 + \sum_{j=1}^{n-1} \Pi_{j,j+1} + \Theta^x_{j,j+1}
\ee
which acts on $\tilde{\calH}_p$.
Note that positions of $y$-particles are integrals of motion for $H^x$.
Moreover, for fixed positions of $y$-particles,
any term in $H^x$ touching a $y$-particle vanishes. Hence
$H^x$ can be analyzed separately on each  interval between consecutive $y$-particles.
Since our goal is to get a lower bound on the ground state energy,
we can only analyze the interval between $1$ and the first $y$-particle.
Equivalently, we can redefine $n$ and focus on the case $p=1$, $q=0$, that is,
assume that there is only one unmatched right bracket.
The relevant Hilbert space $\tilde{\calH}_1$ is now spanned by states
\[
|s\ra \otimes |x\ra \otimes |t\ra, \quad \mbox{where} \quad s\in \calM_{j-1}, \quad t\in \calM_{n-j}.
\]
Recall that $\calM_k$ is the set of Motzkin paths (balanced strings of left and right brackets)
of length $k$.

We would like to treat the terms responsible for the motion and detection of
the $x$-particle as a small perturbation. To this end, choose any $0<\epsilon\le 1$ and define
the Hamiltonian
\[
H^x_\epsilon = \sum_{j=1}^{n-1} \Pi_{j,j+1} + \epsilon |x\ra\la x|_1 + \epsilon \sum_{j=1}^{n-1} \Theta^x_{j,j+1}.
\]
Clearly, $H^x_\epsilon\le H^x$, so it suffices to get a lower bound
on the ground state energy of $H^x_\epsilon$.

Let us first find the ground subspace and the spectral gap of the unperturbed Hamiltonian $H^x_0=\sum_{j=1}^{n-1} \Pi_{j,j+1}$.
Note that the position of the $x$-particle $j$ is an invariant of motion for $H^x_0$.
Moreover, any projector $\Pi_{i,i+1}$ touching the $x$-particle vanishes.
Hence we can analyze $H^x_0$ separately on the two disjoint intervals $A=[1,j-1]$ and
$B=[j+1,n]$. It follows that the ground subspace of $H^x_0$ is spanned by normalized states
\be
\label{fixed_j}
|\psi_j\ra = |\calM_{j-1}\ra \otimes |x\ra \otimes |\calM_{n-j}\ra, \quad j=1,\ldots,n.
\ee
The spectral gap of  $H^x_0$ can also be computed separately in $A$ and $B$.
Since we have already shown that the original Hamiltonian
Eq.~(\ref{H}) has a polynomial gap inside the Motzkin subspace, we conclude that
$\lambda_2(H^x_0)\ge n^{-O(1)}$.

Let us now turn on the perturbation.
The first-order effective Hamiltonian acting on the ground subspace
spanned by $\psi_1,\ldots,\psi_n$ describes a hopping of the $x$-particle
on a chain of length $n$ with a delta-like repulsive potential applied at site $j=1$.
Parameters of the hopping Hamiltonian can be found by calculating the matrix elements
\[
\la \psi_j|\Theta^x_{j,j+1} |\psi_j\ra =\frac{M_{n-j-1}}{2M_{n-j}}\equiv \alpha_j^2,
\]
\[
\la \psi_{j+1}|\Theta^x_{j,j+1} |\psi_{j+1}\ra =\frac{M_{j-1}}{2M_{j}}\equiv \beta_j^2,
\]
and
\[
\la \psi_j| \Theta^x_{j,j+1} |\psi_{j+1}\ra =-\frac12 \sqrt{\frac{M_{n-j-1}}{M_{n-j}} \cdot \frac{M_{j-1}}{M_{j}}}=-\alpha_j \beta_j,
\]
where $M_k=|\calM_k|$ is the $k$-th Motzkin number.
We arrive at the effective hopping Hamiltonian acting on $\CC^n$, namely,
\be
\label{Heff1D}
H_{\eff}=|1\ra\la 1| + \sum_{j=1}^{n-1} \Gamma_{j,j+1},
\ee
where
\bea
\Gamma_{j,j+1} &=& \alpha_j^2 \, |j\ra\la j| + \beta_j^2\,  |j+1\ra\la j+1| \nn \\
&& - \alpha_j\beta_j (|j\ra\la j+1| + |j+1\ra\la j|)
\eea
is a rank-$1$ projector.
Applying the Projection Lemma of~\cite{KKR04} we infer that
\[
\lambda_1(H^x_\epsilon)\ge \epsilon \lambda_1(H_{\eff}) - \frac{O(\epsilon^2) \| V\|^2}{\lambda_2(H^x_0) - 2\epsilon \|V\|},
\]
where $V=|x\ra\la x|_1+\sum_{j=1}^{n-1} \Theta^x_{j,j+1}$ is the perturbation operator. Since $\lambda_2(H^x_0) \ge n^{-O(1)}$,
we can choose $\epsilon$ polynomial in $1/n$ such that $2\epsilon \|V\|$ is small compared with $\lambda_2(H^x_0)$.
For this choice of $\epsilon$ one gets
\[
\lambda_1(H^x_\epsilon)\ge \epsilon \lambda_1(H_{\eff}) - O(\epsilon^2) n^{O(1)}.
\]
Hence it suffices to show that $\lambda_1(H_{\eff})\ge n^{-O(1)}$,
where $H_{\eff}$ is now the single $x$-particle hopping Hamiltonian Eq.~(\ref{Heff1D}).

Let us first focus on the hopping Hamiltonian without the repulsive potential:
\[
H_{move}\equiv \sum_{j=1}^{n-1} \Gamma_{j,j+1}.
\]
This Hamiltonian is FF and its unique ground state is
\be
\label{g}
|g\ra \sim \sum_{j=1}^n \sqrt{M_{j-1} M_{n-j}} \, |j\ra.
\ee
Our strategy will be to bound the spectral gap of $H_{move}$
and apply the Projection Lemma to $H_{\eff}$ by treating the repulsive potential $|1\ra\la 1|$
as a perturbation of $H_{move}$. First let us
map $H_{move}$ to a stochastic matrix describing a random
walk on the interval $[1,n]$ with the steady state $\pi(j)=\la j|g\ra^2$.
For any $a,b\in [1,n]$ define
\be
\label{walk2}
P(j,k)=\delta_{j,k} - \la j|H_{move}|k\ra \sqrt{\frac{\pi(k)}{\pi(j)}}.
\ee
Since $\sqrt{\pi(j)}$ is a zero eigenvector of $H_{move}$, we
infer that $\sum_k P(j,k)=1$ and $\sum_j \pi(j) P(j,k)=\pi(k)$.
A simple algebra shows that
\[
P(j,j+1)=\frac{M_{n-j-1}}{2M_{n-j}} \quad \mbox{and} \quad P(j+1,j)=\frac{M_{j-1}}{2M_{j}}
\]
are the only non-zero off-diagonal matrix elements of $P$.
We shall use the following property of the Motzkin numbers.
\begin{lemma}
For any $n\ge 1$ one has $1/3 \le M_n/M_{n+1}\le 1$.
Furthermore, for large $n$ one can use an approximation
\be
M_n \approx c\frac{3^n}{n^{3/2}}
\ee
where $c\approx 1.46$.
\end{lemma}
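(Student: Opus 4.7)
The two inequalities split naturally. The upper bound $M_n\le M_{n+1}$ is immediate: the map $s\mapsto s0$ is an injection from Motzkin paths of length $n$ into those of length $n+1$.

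For the lower bound $M_n/M_{n+1}\ge 1/3$, i.e.\ $M_{n+1}\le 3M_n$, the plan is to use the generating function $M(x)=\sum_{n\ge 0}M_n x^n$. The standard first-return decomposition (any Motzkin path is either empty, or starts with $0$ followed by a Motzkin path, or starts with $l$, closes at some matched $r$, and splits into two Motzkin subpaths) gives the algebraic identity $x^2M(x)^2+(x-1)M(x)+1=0$, which solves to $M(x)=(1-x-\sqrt{1-2x-3x^2})/(2x^2)$. Differentiating the algebraic equation and matching coefficients in $x^n$ yields the classical three-term recurrence $(n+3)M_{n+1}=(2n+3)M_n+3n M_{n-1}$ for $n\ge 1$. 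Setting $\rho_n=M_{n+1}/M_n$ this reads $\rho_n=(2n+3)/(n+3)+3n/((n+3)\rho_{n-1})$, which is a \emph{decreasing} function of $\rho_{n-1}$. I would then prove $\rho_n\le 3$ for all $n$ by induction starting from $\rho_0=1$.

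For the asymptotic $M_n\approx c\cdot 3^n/n^{3/2}$, the plan is to apply singularity analysis in the sense of Flajolet--Odlyzko. Since $1-2x-3x^2=(1-3x)(1+x)$, the function $M(x)$ has a unique dominant singularity at $x=1/3$. Setting $u=1-3x$ and expanding the square root near $u=0$ gives $M(x)=3-3\sqrt{3}\,u^{1/2}+O(u)$, and the transfer theorem turns the $-3\sqrt{3}\,u^{1/2}$ contribution into $M_n\sim \frac{3\sqrt{3}}{2\sqrt{\pi}}\cdot\frac{3^n}{n^{3/2}}$, so $c=3\sqrt{3}/(2\sqrt{\pi})\approx 1.466$.

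The main obstacle is the induction for the upper bound $\rho_n\le 3$: since the recurrence expresses $\rho_n$ as a decreasing function of $\rho_{n-1}$, an upper bound on $\rho_{n-1}$ is by itself insufficient, and the induction must carry along a matching lower bound. A clean way to close the loop is to prove jointly the coupled invariant $3(n+1)/(n+4)\le \rho_n\le 3$: using $\rho_{n-1}\le 3$ gives $\rho_n\ge 3(n+1)/(n+3)$, which is stronger than the needed lower bound at the next step, and using $\rho_{n-1}\ge 3n/(n+3)$ gives $\rho_n\le (2n+3)/(n+3)+1=3$, closing the recursion. Alternatively, one may verify $\rho_n\le 3$ directly for small $n$ and invoke the asymptotic for large $n$, which yields $\rho_n=3(n/(n+1))^{3/2}(1+O(1/n))<3$.
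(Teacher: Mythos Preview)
The paper states this lemma without proof, treating it as a standard fact about Motzkin numbers; there is nothing to compare against. Your argument is essentially correct and would serve as a proof.

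Two small corrections. First, in the coupled induction, the line ``using $\rho_{n-1}\ge 3n/(n+3)$ gives $\rho_n\le (2n+3)/(n+3)+1=3$'' has an arithmetic slip: $(2n+3)/(n+3)+1=(3n+6)/(n+3)=3(n+2)/(n+3)$, which is strictly less than $3$, not equal to it. The desired bound $\rho_n\le 3$ of course still follows. Second, the base case should be checked explicitly: $\rho_0=M_1/M_0=1$ indeed satisfies $3/4\le \rho_0\le 3$, so the induction starts. With these tweaks the argument for $1/3\le M_n/M_{n+1}\le 1$ is clean, and your singularity analysis yielding $c=3\sqrt{3}/(2\sqrt{\pi})$ is the standard derivation of the Motzkin asymptotic and matches the constant quoted in the paper.
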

The lemma implies that
\[
\frac16 \le P(j,j\pm 1)\le \frac12
\]
for all $j$. Hence the diagonal matrix elements $P(j,j)$ are non-negative, that is,
we indeed can regard $P(j,k)$ as a transition probability from $j$ to $k$.
Furthermore, using Eq.~(\ref{g}) and the above lemma we infer that the steady state $\pi$ is `almost uniform', that is,
\be
n^{-O(1)} \le \frac{\pi(k)}{\pi(j)} \le n^{O(1)} \quad \mbox{for all $1\le j,k\le n$}.
\ee
In particular, $\min_j{\pi(j)} \ge n^{-O(1)}$.
We can now easily bound the spectral gap of $P$.  For example, applying the canonical
paths theorem stated above we get $1-\lambda_2(P)\ge 1/(\rho l)$
where $\rho$ is defined in Eq.~(\ref{edge_load}) and the canonical path $\gamma(s,t)$ simply
moves the $x$-particle from $s$ to $t$. Since the denominator in Eq.~(\ref{edge_load})
is lower bounded by $n^{-O(1)}$, we conclude that $1-\lambda_2(P)\ge n^{-O(1)}$.
It shows that $\lambda_2(H_{move}) \ge n^{-O(1)}$.

To conclude the proof, it remains to apply the Projection Lemma to
$H_{\eff}$ defined in Eq.~(\ref{Heff1D}) by treating the
repulsive potential $|1\ra\la 1|$ as  a perturbation.
Now the effective first-order Hamiltonian will be simply a $c$-number $\la 1|g\ra^2=\pi(1)\ge n^{-O(1)}$
which proves the bound $\lambda_1(H_{\eff}) \ge n^{-O(1)}$.

\section{Acknowledgments}
We thank Alexei Kitaev, Joel B. Lewis,  Richard P. Stanley, and Guifre Vidal  for useful discussions.
SB was partially  supported by the
DARPA QUEST program under contract number HR0011-09-C-0047.
DN and LC acknowledge support from the European project
Q-ESSENCE 2010-248095 (7th FP), the Slovak Research and Development
Agency under the contract No. LPP-0430-09, and COQI APVV-0646-10.
RM and PS were supported in
part by the National Science Foundation
through grant number CCF-0829421, and PS was supported in part by the
U.S. Army Research Office through grant number W911NF-09-1-0438.


\begin{thebibliography}{30}%
\makeatletter
\providecommand \@ifxundefined [1]{%
 \@ifx{#1\undefined}
}%
\providecommand \@ifnum [1]{%
 \ifnum #1\expandafter \@firstoftwo
 \else \expandafter \@secondoftwo
 \fi
}%
\providecommand \@ifx [1]{%
 \ifx #1\expandafter \@firstoftwo
 \else \expandafter \@secondoftwo
 \fi
}%
\providecommand \natexlab [1]{#1}%
\providecommand \enquote  [1]{``#1''}%
\providecommand \bibnamefont  [1]{#1}%
\providecommand \bibfnamefont [1]{#1}%
\providecommand \citenamefont [1]{#1}%
\providecommand \href@noop [0]{\@secondoftwo}%
\providecommand \href [0]{\begingroup \@sanitize@url \@href}%
\providecommand \@href[1]{\@@startlink{#1}\@@href}%
\providecommand \@@href[1]{\endgroup#1\@@endlink}%
\providecommand \@sanitize@url [0]{\catcode `\\12\catcode `\$12\catcode
  `\&12\catcode `\#12\catcode `\^12\catcode `\_12\catcode `\%12\relax}%
\providecommand \@@startlink[1]{}%
\providecommand \@@endlink[0]{}%
\providecommand \url  [0]{\begingroup\@sanitize@url \@url }%
\providecommand \@url [1]{\endgroup\@href {#1}{\urlprefix }}%
\providecommand \urlprefix  [0]{URL }%
\providecommand \Eprint [0]{\href }%
\@ifxundefined \urlstyle {%
  \providecommand \doi  [0]{\begingroup \@sanitize@url \@doi}%
  \providecommand \@doi [1]{\endgroup \@@startlink {\doibase
  #1}doi:\discretionary {}{}{}#1\@@endlink }%
}{%
  \providecommand \doi  [0]{doi:\discretionary{}{}{}\begingroup
  \urlstyle{rm}\Url }%
}%
\providecommand \doibase [0]{http://dx.doi.org/}%
\providecommand \Doi [0]{\begingroup \@sanitize@url \@Doi }%
\providecommand \@Doi  [1]{\endgroup\@@startlink{\doibase#1}\@@Doi}%
\providecommand \@@Doi [1]{#1\@@endlink}%
\providecommand \selectlanguage [0]{\@gobble}%
\providecommand \bibinfo  [0]{\@secondoftwo}%
\providecommand \bibfield  [0]{\@secondoftwo}%
\providecommand \translation [1]{[#1]}%
\providecommand \BibitemOpen [0]{}%
\providecommand \bibitemStop [0]{}%
\providecommand \bibitemNoStop [0]{.\EOS\space}%
\providecommand \EOS [0]{\spacefactor3000\relax}%
\providecommand \BibitemShut  [1]{\csname bibitem#1\endcsname}%
\bibitem [{\citenamefont {Sachdev}(2011)}]{Sachdev11}%
  \BibitemOpen
  \bibfield  {author} {\bibinfo {author} {\bibfnamefont {S.}~\bibnamefont
  {Sachdev}},\ }\href@noop {} {\emph {\bibinfo {title} {Quantum Phase
  Transitions}}}\ (\bibinfo  {publisher} {Cambridge University Press},\
  \bibinfo {year} {2011})\BibitemShut {NoStop}%
\bibitem [{\citenamefont {Vidal}\ \emph {et~al.}(2003)\citenamefont {Vidal},
  \citenamefont {Latorre}, \citenamefont {Rico},\ and\ \citenamefont
  {Kitaev}}]{Vidal03}%
  \BibitemOpen
  \bibfield  {author} {\bibinfo {author} {\bibfnamefont {G.}~\bibnamefont
  {Vidal}}, \bibinfo {author} {\bibfnamefont {J.~I.}\ \bibnamefont {Latorre}},
  \bibinfo {author} {\bibfnamefont {E.}~\bibnamefont {Rico}}, \ and\ \bibinfo
  {author} {\bibfnamefont {A.}~\bibnamefont {Kitaev}},\ }\href@noop {}
  {\bibfield  {journal} {\bibinfo  {journal} {Phys.~Rev.~Lett},\ }\textbf
  {\bibinfo {volume} {90}},\ \bibinfo {pages} {227902} (\bibinfo {year}
  {2003})}\BibitemShut {NoStop}%
\bibitem [{\citenamefont {Korepin}(2004)}]{Korepin04}%
  \BibitemOpen
  \bibfield  {author} {\bibinfo {author} {\bibfnamefont {V.~E.}\ \bibnamefont
  {Korepin}},\ }\href@noop {} {\bibfield  {journal} {\bibinfo  {journal} {Phys.
  Rev. Lett.},\ }\textbf {\bibinfo {volume} {92}},\ \bibinfo {pages} {096402}
  (\bibinfo {year} {2004})}\BibitemShut {NoStop}%
\bibitem [{\citenamefont {Hastings}(2007)}]{Hastings07}%
  \BibitemOpen
  \bibfield  {author} {\bibinfo {author} {\bibfnamefont {M.~B.}\ \bibnamefont
  {Hastings}},\ }\href@noop {} {\bibfield  {journal} {\bibinfo  {journal} {J.
  Stat. Mech.},\ \bibinfo {pages} {P08024}} (\bibinfo {year}
  {2007})}\BibitemShut {NoStop}%
\bibitem [{\citenamefont {Eisert}\ \emph {et~al.}(2010)\citenamefont {Eisert},
  \citenamefont {Cramer},\ and\ \citenamefont {Plenio}}]{Eisert08}%
  \BibitemOpen
  \bibfield  {author} {\bibinfo {author} {\bibfnamefont {J.}~\bibnamefont
  {Eisert}}, \bibinfo {author} {\bibfnamefont {M.}~\bibnamefont {Cramer}}, \
  and\ \bibinfo {author} {\bibfnamefont {M.}~\bibnamefont {Plenio}},\
  }\href@noop {} {\bibfield  {journal} {\bibinfo  {journal} {Rev. Mod. Phys.},\
  }\textbf {\bibinfo {volume} {82}},\ \bibinfo {pages} {277} (\bibinfo {year}
  {2010})}\BibitemShut {NoStop}%
\bibitem [{\citenamefont {Arad}\ \emph {et~al.}(2011)\citenamefont {Arad},
  \citenamefont {Landau},\ and\ \citenamefont {Vazirani}}]{Arad11}%
  \BibitemOpen
  \bibfield  {author} {\bibinfo {author} {\bibfnamefont {I.}~\bibnamefont
  {Arad}}, \bibinfo {author} {\bibfnamefont {Z.}~\bibnamefont {Landau}}, \ and\
  \bibinfo {author} {\bibfnamefont {U.}~\bibnamefont {Vazirani}},\ }\href@noop
  {} { (\bibinfo {year} {2011})},\ \Eprint
  {http://arxiv.org/abs/arXiv:1111.2970} {arXiv:1111.2970} \BibitemShut
  {NoStop}%
\bibitem [{\citenamefont {Latorre}\ \emph {et~al.}(2004)\citenamefont
  {Latorre}, \citenamefont {Rico},\ and\ \citenamefont {Vidal}}]{Latorre03}%
  \BibitemOpen
  \bibfield  {author} {\bibinfo {author} {\bibfnamefont {J.~I.}\ \bibnamefont
  {Latorre}}, \bibinfo {author} {\bibfnamefont {E.}~\bibnamefont {Rico}}, \
  and\ \bibinfo {author} {\bibfnamefont {G.}~\bibnamefont {Vidal}},\
  }\href@noop {} {\bibfield  {journal} {\bibinfo  {journal} {Quant. Inf.
  Comput.},\ }\textbf {\bibinfo {volume} {4}},\ \bibinfo {pages} {48} (\bibinfo
  {year} {2004})}\BibitemShut {NoStop}%
\bibitem [{\citenamefont {Koma}\ and\ \citenamefont
  {Nachtergaele}(1997)}]{Koma95}%
  \BibitemOpen
  \bibfield  {author} {\bibinfo {author} {\bibfnamefont {T.}~\bibnamefont
  {Koma}}\ and\ \bibinfo {author} {\bibfnamefont {B.}~\bibnamefont
  {Nachtergaele}},\ }\href@noop {} {\bibfield  {journal} {\bibinfo  {journal}
  {Lett. Math. Phys.},\ }\textbf {\bibinfo {volume} {40}},\ \bibinfo {pages}
  {1} (\bibinfo {year} {1997})}\BibitemShut {NoStop}%
\bibitem [{\citenamefont {Affleck}\ \emph {et~al.}(1987)\citenamefont
  {Affleck}, \citenamefont {Kennedy}, \citenamefont {Lieb},\ and\ \citenamefont
  {Tasaki}}]{AKLT87}%
  \BibitemOpen
  \bibfield  {author} {\bibinfo {author} {\bibfnamefont {I.}~\bibnamefont
  {Affleck}}, \bibinfo {author} {\bibfnamefont {T.}~\bibnamefont {Kennedy}},
  \bibinfo {author} {\bibfnamefont {E.~H.}\ \bibnamefont {Lieb}}, \ and\
  \bibinfo {author} {\bibfnamefont {H.}~\bibnamefont {Tasaki}},\ }\href@noop {}
  {\bibfield  {journal} {\bibinfo  {journal} {Phys.~Rev.~Lett},\ }\textbf
  {\bibinfo {volume} {59}},\ \bibinfo {pages} {799–802} (\bibinfo {year}
  {1987})}\BibitemShut {NoStop}%
\bibitem [{\citenamefont {Fannes}\ \emph {et~al.}(1992)\citenamefont {Fannes},
  \citenamefont {Nachtergaele},\ and\ \citenamefont {Werner}}]{Fannes92}%
  \BibitemOpen
  \bibfield  {author} {\bibinfo {author} {\bibfnamefont {M.}~\bibnamefont
  {Fannes}}, \bibinfo {author} {\bibfnamefont {B.}~\bibnamefont
  {Nachtergaele}}, \ and\ \bibinfo {author} {\bibfnamefont {R.~F.}\
  \bibnamefont {Werner}},\ }\href@noop {} {\bibfield  {journal} {\bibinfo
  {journal} {Comm. Math. Phys.},\ }\textbf {\bibinfo {volume} {144}},\ \bibinfo
  {pages} {443} (\bibinfo {year} {1992})}\BibitemShut {NoStop}%
\bibitem [{\citenamefont {Perez-Garcia}\ \emph {et~al.}(2007)\citenamefont
  {Perez-Garcia}, \citenamefont {Verstraete}, \citenamefont {Cirac},\ and\
  \citenamefont {Wolf}}]{PEPS07}%
  \BibitemOpen
  \bibfield  {author} {\bibinfo {author} {\bibfnamefont {D.}~\bibnamefont
  {Perez-Garcia}}, \bibinfo {author} {\bibfnamefont {F.}~\bibnamefont
  {Verstraete}}, \bibinfo {author} {\bibfnamefont {J.~I.}\ \bibnamefont
  {Cirac}}, \ and\ \bibinfo {author} {\bibfnamefont {M.~M.}\ \bibnamefont
  {Wolf}},\ }\href@noop {} {\bibfield  {journal} {\bibinfo  {journal} {Quant.
  Inf. Comp.},\ }\textbf {\bibinfo {volume} {8}},\ \bibinfo {pages} {0650}
  (\bibinfo {year} {2007})}\BibitemShut {NoStop}%
\bibitem [{\citenamefont {Chen}\ \emph {et~al.}(2010)\citenamefont {Chen},
  \citenamefont {Chen}, \citenamefont {Duan}, \citenamefont {Ji},\ and\
  \citenamefont {Zeng}}]{Chen04}%
  \BibitemOpen
  \bibfield  {author} {\bibinfo {author} {\bibfnamefont {J.}~\bibnamefont
  {Chen}}, \bibinfo {author} {\bibfnamefont {X.}~\bibnamefont {Chen}}, \bibinfo
  {author} {\bibfnamefont {R.}~\bibnamefont {Duan}}, \bibinfo {author}
  {\bibfnamefont {Z.}~\bibnamefont {Ji}}, \ and\ \bibinfo {author}
  {\bibfnamefont {B.}~\bibnamefont {Zeng}},\ }\href@noop {} { (\bibinfo {year}
  {2010})},\ \Eprint {http://arxiv.org/abs/arXiv:1004.3787} {arXiv:1004.3787}
  \BibitemShut {NoStop}%
\bibitem [{\citenamefont {Kraus}\ \emph {et~al.}(2008)\citenamefont {Kraus},
  \citenamefont {B\"uchler}, \citenamefont {Diehl}, \citenamefont {Kantian},
  \citenamefont {Micheli},\ and\ \citenamefont {Zoller}}]{Kraus08}%
  \BibitemOpen
  \bibfield  {author} {\bibinfo {author} {\bibfnamefont {B.}~\bibnamefont
  {Kraus}}, \bibinfo {author} {\bibfnamefont {H.~P.}\ \bibnamefont
  {B\"uchler}}, \bibinfo {author} {\bibfnamefont {S.}~\bibnamefont {Diehl}},
  \bibinfo {author} {\bibfnamefont {A.}~\bibnamefont {Kantian}}, \bibinfo
  {author} {\bibfnamefont {A.}~\bibnamefont {Micheli}}, \ and\ \bibinfo
  {author} {\bibfnamefont {P.}~\bibnamefont {Zoller}},\ }\href@noop {}
  {\bibfield  {journal} {\bibinfo  {journal} {Phys. Rev. A},\ }\textbf
  {\bibinfo {volume} {78}},\ \bibinfo {pages} {042307} (\bibinfo {year}
  {2008})}\BibitemShut {NoStop}%
\bibitem [{\citenamefont {Verstraete}\ \emph {et~al.}(2009)\citenamefont
  {Verstraete}, \citenamefont {Wolf},\ and\ \citenamefont
  {Cirac}}]{Verstraete08}%
  \BibitemOpen
  \bibfield  {author} {\bibinfo {author} {\bibfnamefont {F.}~\bibnamefont
  {Verstraete}}, \bibinfo {author} {\bibfnamefont {M.~M.}\ \bibnamefont
  {Wolf}}, \ and\ \bibinfo {author} {\bibfnamefont {J.~I.}\ \bibnamefont
  {Cirac}},\ }\href@noop {} {\bibfield  {journal} {\bibinfo  {journal} {Nature
  Physics},\ }\textbf {\bibinfo {volume} {5}},\ \bibinfo {pages} {633}
  (\bibinfo {year} {2009})}\BibitemShut {NoStop}%
\bibitem [{\citenamefont {Diehl}\ \emph {et~al.}(2008)\citenamefont {Diehl},
  \citenamefont {Micheli}, \citenamefont {Kantian}, \citenamefont {Kraus},
  \citenamefont {B\"uchler},\ and\ \citenamefont {Zoller}}]{Diehl08}%
  \BibitemOpen
  \bibfield  {author} {\bibinfo {author} {\bibfnamefont {S.}~\bibnamefont
  {Diehl}}, \bibinfo {author} {\bibfnamefont {A.}~\bibnamefont {Micheli}},
  \bibinfo {author} {\bibfnamefont {A.}~\bibnamefont {Kantian}}, \bibinfo
  {author} {\bibfnamefont {B.}~\bibnamefont {Kraus}}, \bibinfo {author}
  {\bibfnamefont {H.}~\bibnamefont {B\"uchler}}, \ and\ \bibinfo {author}
  {\bibfnamefont {P.}~\bibnamefont {Zoller}},\ }\href@noop {} {\bibfield
  {journal} {\bibinfo  {journal} {Nature Physics},\ }\textbf {\bibinfo {volume}
  {4}},\ \bibinfo {pages} {878} (\bibinfo {year} {2008})}\BibitemShut {NoStop}%
\bibitem [{\citenamefont {Donaghey}\ and\ \citenamefont
  {Shapiro}(1977)}]{Motzkin2}%
  \BibitemOpen
  \bibfield  {author} {\bibinfo {author} {\bibfnamefont {R.}~\bibnamefont
  {Donaghey}}\ and\ \bibinfo {author} {\bibfnamefont {L.}~\bibnamefont
  {Shapiro}},\ }\href@noop {} {\bibfield  {journal} {\bibinfo  {journal}
  {Journal of Combinatorial Theory (A)},\ }\textbf {\bibinfo {volume} {23}},\
  \bibinfo {pages} {291} (\bibinfo {year} {1977})}\BibitemShut {NoStop}%
\bibitem [{\citenamefont {Stanley}(1999)}]{Motzkin1}%
  \BibitemOpen
  \bibfield  {author} {\bibinfo {author} {\bibfnamefont {R.~P.}\ \bibnamefont
  {Stanley}},\ }\href@noop {} {\emph {\bibinfo {title} {Enumerative
  Combinatorics, Volume 2}}}\ (\bibinfo  {publisher} {Cambridge University
  Press},\ \bibinfo {year} {1999})\ p.\ \bibinfo {pages} {238}\BibitemShut
  {NoStop}%
\bibitem [{Note1()}]{Note1}%
  \BibitemOpen
  \bibinfo {note} {Here and below the spectral gap of a Hamiltonian means the
  difference between the smallest and the second smallest
  eigenvalue.}\BibitemShut {Stop}%
\bibitem [{Note2()}]{Note2}%
  \BibitemOpen
  \bibinfo {note} {One can regard Dyck paths as a special case of Motzkin paths
  in which no `$0$' symbols are allowed.}\BibitemShut {Stop}%
\bibitem [{\citenamefont {Kempe}\ \emph {et~al.}(2006)\citenamefont {Kempe},
  \citenamefont {Kitaev},\ and\ \citenamefont {Regev}}]{KKR04}%
  \BibitemOpen
  \bibfield  {author} {\bibinfo {author} {\bibfnamefont {J.}~\bibnamefont
  {Kempe}}, \bibinfo {author} {\bibfnamefont {A.}~\bibnamefont {Kitaev}}, \
  and\ \bibinfo {author} {\bibfnamefont {O.}~\bibnamefont {Regev}},\
  }\href@noop {} {\bibfield  {journal} {\bibinfo  {journal} {SIAM J. of
  Comp.},\ }\textbf {\bibinfo {volume} {35}},\ \bibinfo {pages} {1070}
  (\bibinfo {year} {2006})}\BibitemShut {NoStop}%
\bibitem [{\citenamefont {Schrijver}(2002)}]{Schrijver}%
  \BibitemOpen
  \bibfield  {author} {\bibinfo {author} {\bibfnamefont {A.}~\bibnamefont
  {Schrijver}},\ }\href@noop {} {\emph {\bibinfo {title} {Combinatorial
  Optimization}}}\ (\bibinfo  {publisher} {Springer},\ \bibinfo {year}
  {2002})\BibitemShut {NoStop}%
\bibitem [{\citenamefont {Gottesman}\ and\ \citenamefont
  {Hastings}(2009)}]{Gottesman09}%
  \BibitemOpen
  \bibfield  {author} {\bibinfo {author} {\bibfnamefont {D.}~\bibnamefont
  {Gottesman}}\ and\ \bibinfo {author} {\bibfnamefont {M.~B.}\ \bibnamefont
  {Hastings}},\ }\href@noop {} {\bibfield  {journal} {\bibinfo  {journal} {New
  J. Phys.},\ }\textbf {\bibinfo {volume} {12}},\ \bibinfo {pages} {025002}
  (\bibinfo {year} {2009})}\BibitemShut {NoStop}%
\bibitem [{\citenamefont {Irani}(2010)}]{Irani09}%
  \BibitemOpen
  \bibfield  {author} {\bibinfo {author} {\bibfnamefont {S.}~\bibnamefont
  {Irani}},\ }\href@noop {} {\bibfield  {journal} {\bibinfo  {journal} {J.
  Math. Phys.},\ }\textbf {\bibinfo {volume} {51}},\ \bibinfo {pages} {022101}
  (\bibinfo {year} {2010})}\BibitemShut {NoStop}%
\bibitem [{\citenamefont {Movassagh}\ \emph {et~al.}(2010)\citenamefont
  {Movassagh}, \citenamefont {Farhi}, \citenamefont {Goldstone}, \citenamefont
  {Nagaj}, \citenamefont {Osborne},\ and\ \citenamefont {Shor}}]{MFGNOS}%
  \BibitemOpen
  \bibfield  {author} {\bibinfo {author} {\bibfnamefont {R.}~\bibnamefont
  {Movassagh}}, \bibinfo {author} {\bibfnamefont {E.}~\bibnamefont {Farhi}},
  \bibinfo {author} {\bibfnamefont {J.}~\bibnamefont {Goldstone}}, \bibinfo
  {author} {\bibfnamefont {D.}~\bibnamefont {Nagaj}}, \bibinfo {author}
  {\bibfnamefont {T.~J.}\ \bibnamefont {Osborne}}, \ and\ \bibinfo {author}
  {\bibfnamefont {P.~W.}\ \bibnamefont {Shor}},\ }\href@noop {} {\bibfield
  {journal} {\bibinfo  {journal} {Phys.~Rev~{\bf A}},\ }\textbf {\bibinfo
  {volume} {82}},\ \bibinfo {pages} {012318} (\bibinfo {year}
  {2010})}\BibitemShut {NoStop}%
\bibitem [{Note3()}]{Note3}%
  \BibitemOpen
  \bibinfo {note} {Though the results of Ref.~\cite {MFGNOS} are applicable to
  more general Hamiltonians, the convenient restriction to random projectors is
  sufficient for addressing the degeneracy and frustration
  condition.}\BibitemShut {Stop}%
\bibitem [{\citenamefont {Wolf}\ \emph {et~al.}(2006)\citenamefont {Wolf},
  \citenamefont {Ortiz}, \citenamefont {Verstraete},\ and\ \citenamefont
  {Cirac}}]{Wolf06}%
  \BibitemOpen
  \bibfield  {author} {\bibinfo {author} {\bibfnamefont {M.~M.}\ \bibnamefont
  {Wolf}}, \bibinfo {author} {\bibfnamefont {G.}~\bibnamefont {Ortiz}},
  \bibinfo {author} {\bibfnamefont {F.}~\bibnamefont {Verstraete}}, \ and\
  \bibinfo {author} {\bibfnamefont {J.~I.}\ \bibnamefont {Cirac}},\ }\href@noop
  {} {\bibfield  {journal} {\bibinfo  {journal} {Phys. Rev. Lett.},\ }\textbf
  {\bibinfo {volume} {97}},\ \bibinfo {pages} {110403} (\bibinfo {year}
  {2006})}\BibitemShut {NoStop}%
\bibitem [{\citenamefont {Diaconis}\ and\ \citenamefont
  {Stroock}(1991)}]{Diaconis91}%
  \BibitemOpen
  \bibfield  {author} {\bibinfo {author} {\bibfnamefont {P.}~\bibnamefont
  {Diaconis}}\ and\ \bibinfo {author} {\bibfnamefont {D.}~\bibnamefont
  {Stroock}},\ }\href@noop {} {\bibfield  {journal} {\bibinfo  {journal} {Ann.
  Appl. Probab.},\ }\textbf {\bibinfo {volume} {1}},\ \bibinfo {pages} {36}
  (\bibinfo {year} {1991})}\BibitemShut {NoStop}%
\bibitem [{\citenamefont {Sinclair}(1992)}]{Sinclair92}%
  \BibitemOpen
  \bibfield  {author} {\bibinfo {author} {\bibfnamefont {A.}~\bibnamefont
  {Sinclair}},\ }\href@noop {} {\bibfield  {journal} {\bibinfo  {journal}
  {Combinatorics, Probability, and Computing},\ }\textbf {\bibinfo {volume}
  {1}},\ \bibinfo {pages} {351} (\bibinfo {year} {1992})}\BibitemShut {NoStop}%
\bibitem [{\citenamefont {Verstraete}\ and\ \citenamefont
  {Cirac}(2006)}]{Verstraete05}%
  \BibitemOpen
  \bibfield  {author} {\bibinfo {author} {\bibfnamefont {F.}~\bibnamefont
  {Verstraete}}\ and\ \bibinfo {author} {\bibfnamefont {J.}~\bibnamefont
  {Cirac}},\ }\href@noop {} {\bibfield  {journal} {\bibinfo  {journal} {Phys.
  Rev. B},\ }\textbf {\bibinfo {volume} {73}},\ \bibinfo {pages} {094423}
  (\bibinfo {year} {2006})}\BibitemShut {NoStop}%
\bibitem [{\citenamefont {Schuch}\ \emph {et~al.}(2008)\citenamefont {Schuch},
  \citenamefont {Cirac},\ and\ \citenamefont {Verstraete}}]{Schuch08}%
  \BibitemOpen
  \bibfield  {author} {\bibinfo {author} {\bibfnamefont {N.}~\bibnamefont
  {Schuch}}, \bibinfo {author} {\bibfnamefont {J.~I.}\ \bibnamefont {Cirac}}, \
  and\ \bibinfo {author} {\bibfnamefont {F.}~\bibnamefont {Verstraete}},\
  }\href@noop {} {\bibfield  {journal} {\bibinfo  {journal} {Phys. Rev.
  Lett.},\ }\textbf {\bibinfo {volume} {100}},\ \bibinfo {pages} {250501}
  (\bibinfo {year} {2008})}\BibitemShut {NoStop}%
\end{thebibliography}

%

\end{document}